\newcommand\beq{\begin{equation}}
\newcommand\eeq{\end{equation}}
\newcommand\bea{\begin{eqnarray}}
\newcommand\eea {\end{eqnarray}}
\newcommand\ba{\begin{array}}
\newcommand\ea {\end{array}}
\newtheorem{Theorem 1}{Theorem}
\newtheorem{Theorem 2}[Theorem 1]{Theorem}
\def\O{{\Omega}}
\def\a{{\alpha}}
\def\g{{\gamma}}
\def\p{{\phi}}
\def\e{{\epsilon}}
\def\la{{\Lambda}}
\def\P{\mathbf{P}}
\def\a{\alpha}
\def\g{\gamma}
\def\un\a{{\underline\alpha}}
\def\pe{\phi_\epsilon}
\def\PF{\Phi^{F*}}
\font\german=eufm10 at 10pt
\def\Buchstabe#1{{\hbox{\german #1}}}
\def\CP{\mathbb{CP}}
\def\R{\mathbb{R}}
\def\ZZ{\mathbb{Z}}
\def\EA{\Buchstabe{A}}      % event algebra
\def\CE{\Buchstabe{A}^*}    % co-event space
\def\Z2{\mathbb{Z}_2}
\def\CE{{\EA^*}}        % space of co-events
\def\H{(\Omega,\EA,\mu)}     % A histories theory
\def\C{{\cal{C}}(\Omega,\EA,\mu)} % Classical co-events
\def\M{{\cal{M}}(\Omega,\EA,\mu)} % Multiplicative co-events
\def\MF{{\cal{M}}_F^*}            % fat co-events
\def\Mstar{{\cal{M}}^*}           % duals of \M
\def\Mint{{\cal{M}}^*_\cap}       % intersectional equivalence classes in \Mstar
\newtheorem{theorem}{Theorem}
\newtheorem{lemma}{Lemma}
\newtheorem{definition}{Definition}
\theoremstyle{remark}
\begin{document}

\begin{center}
   \baselineskip=16pt
   \begin{LARGE}
      \textsl{Dynamics \& Predictions in the Co-Event Interpretation}
   \end{LARGE}
   \vskip 1cm
      Yousef Ghazi-Tabatabai
   \vskip .2cm
   \begin{small}
      \textit{Blackett Laboratory, Imperial College,\\
        London, SW7 2AZ, U.K.}
    \end{small}
   \vskip 0.5cm
      Petros Wallden
   \vskip .2cm
   \begin{small}
      \textit{Raman Research Institute \\
              Bangalore - 560 080, India}
   \end{small}
   \vskip 1cm
\end{center}

%\date{\today}

\begin{abstract}

\noindent Sorkin has introduced a new, observer independent, interpretation of quantum mechanics that can give a successful realist account of the `quantum micro-world' as well as explaining how classicality emerges at the level of observable events for a range of systems including single time `Copenhagen measurements'. This `co-event interpretation' presents us with a new ontology, in which a single `co-event' is real. A new ontology necessitates a review of the dynamical \& predictive mechanism of a theory, and in this paper we begin the process by exploring means of expressing the dynamical and predictive content of histories theories in terms of co-events.

\end{abstract}

\section{Introduction}

\subsection{Opening Remarks}

Building on the work of Dirac \cite{Dirac:1933} and Feynman \cite{Feynman:1948,Feynman:1965}, the \emph{histories approach} to quantum mechanics \cite{Griffiths:1984rx,Omnes:1988ek,Gell:1990,Hartle:1992as,Sorkin:1994dt} is a reformulation and generalisation of (Copenhagen) quantum mechanics in an observer independent framework.  With a `space-time history' replacing the `state' as the fundamental object of the theory, the dynamics can be rephrased as a `decoherence functional' (or alternatively a `quantum measure') on subsets of the `sample space' of histories. Observer independence is then to be sought through the `embedding' of the (Copenhagen) concept of experimentally observable events in the wider (though ill defined) notion of `classical partitions'; which are typically associated with the property of decoherence (dynamical classicality).

However the need for an interpretation remains. Because the mathematical formalism of a histories theory is reminiscent of probability theory, a naive application of the (relatively) well understood `one history is real' classical interpretation may seem appropriate; however it fails to explain gedankenexperimentally realisable theories based on the Kochen-Specker theorem \cite{Dowker:2007zz}.

The \emph{consistent histories} interpretation \cite{Griffiths:1996,Griffiths:1998} is perhaps the most developed alternative. Identifying as `classical' every dynamically classical (decoherent) partition, it allows us to simultaneously assign truth valuations to all propositions within a single \emph{consistent set} (a classical partition), within the `context' of that consistent set, in a manner satisfying Boolean logic. However, two propositions that do not participate in a common consistent set can not be simultaneously (truth) evaluated.

Sorkin has proposed a new interpretation \cite{Sorkin:2006wq,Sorkin:2007,Dowker:2007zz} that can assign truth valuations to all elements of the event algebra simultaneously. This is achieved by moving from the classical `one history is real' to an interpretation based on `one co-event is real'; where a co-event is associated with a truth valuation map $\p:\EA\rightarrow\Z2$ obeying certain restrictions.

A new ontology necessitates a review of the dynamical \& predictive mechanism of a theory. How do our observations and predictions relate to the objects we have identified as (potentially) real? Can this innovation help us to develop a better understanding of `classicality', or a more complete theory of measurement? In this paper we make a start at answering such questions by exploring means of expressing the dynamical and predictive content of histories theories in terms of co-events.

\subsection{Outline of this Paper}

In section \ref{sec:quantum measure theory} we introduce quantum measure theory, a formulation of the histories approach which phrases the dynamics in a fashion resembling a probability measure. In section \ref{sec:the coevent interpretation} we introduce the co-event interpretation, reviewing its application to classical stochastic theories before turning to the multiplicative scheme. We are then in a position to set out and motivate the goals of this paper in section \ref{sec:goals}. In section \ref{sec:dynamics on coevents} we discuss the possibility of `moving wholesale' from histories theories to `co-event theories', in which the dynamics is described by a probability measure on the space of potentially real co-events. Unfortunately this approach stalls, so in section \ref{sec:approximate preclsuion} we focus our attention to experimentally falsifiable predictions, and via Cournot's Principle (section \ref{sec:cournot's principle}) attempt to explain these through \emph{approximate preclusion} (section \ref{sec:approximate co-events}); which we apply in a manner consistent with Literal Strong Cournot (section \ref{sec:strong cournot}), and then in a manner consistent with Operational Weak Cournot (section \ref{sec:quantum weak cournot}). We conclude in section \ref{sec:conclusion}; appendix \ref{appendix:principle classical partition} reviews some results regarding the principle classical partition.

\subsection{Quantum Measure Theory}\label{sec:quantum measure theory}

\subsubsection{The Quantum Measure}

Quantum measure theory \cite{Sorkin:1994dt} is a histories based attempt to approach quantum mechanics as a generalisation of classical stochastic theories; phrasing the quantum dynamics as a generalisation of the probability measure. We start with a brief review of quantum measure theory and refer to \cite{Sorkin:1994dt,Sorkin:1995nj,Salgado:1999pu,
Sorkin:2006wq,Sorkin:2007} for more details.

A \emph{histories theory} (sometimes referred to as a \emph{generalised measure theory}) consists of a triple, $(\O, \EA, \mu)$, of a `sample space' of histories, an event algebra and a measure. The sample space, $\O$, contains all the ``fine grained histories'' or ``formal trajectories'' for the system  {\it e.g.} for $n$-particle mechanics -- classical or quantum -- a history would be a set of $n$ trajectories in spacetime, and for a scalar field theory, a history would be a field configuration on spacetime. In a classical theory $\O$ is the usual measure theoretic sample space.

The event algebra, $\EA$, contains all the (unasserted) propositions that can be made about the system. We will call elements of $\EA$ {\textit{ events}}, following standard terminology in the theory of stochastic processes. Now the power set, $P\O$, of $\O$ is usually thought of as a Boolean algebra under union and intersection, however for our purposes it will be more useful to view $P\O$ as a ring (or an algebra over $\Z2$), with symmetric difference playing the role of addition ($A+B := (A\cup B)\setminus (A\cap B)$) and intersection playing the role of multiplication ($AB:=A\cap B$). In cases where $\O$ is finite, $\EA$ can be identified with the whole power set, $\EA=P\O$, when $\O$ is infinite, $\EA$ can be identified with an sub-ring of the power set, $\EA \subset P\O$. In both cases, the algebraic properties of $\EA$ play a central role in the formulation of the co-event interpretation. In a classical theory $\EA$ will typically be the $\sigma$-algebra (or $\delta$-algebra) of measurable sets.

Predictions about the system --- the dynamical content of the theory ---  are to be gleaned, in some way or another, from
a generalized measure $\mu$, a non-negative real function on $\EA$. $\mu$ is the dynamical law and initial condition rolled into one. In a classical theory, $\mu$ will be a probability measure. In a quantum theory, $\mu$ will obey \cite{Sorkin:1994dt,Salgado:1999pu}:
\begin{enumerate}
\item \textit{Positivity} \beq\label{eq:quantum measure positivity} \mu(A)\geq 0, \eeq
\item \textit{Sum Rule} \beq\label{eq:quantum measure sum rule} \mu(A\sqcup B\sqcup C) = \mu(A\sqcup B) + \mu(B\sqcup C) + \mu(A\sqcup C) -\mu(A) - \mu(B) -\mu(C), \eeq
\item \textit{Unitality} \beq\label{eq:quantum measure unitality} \mu(\O)=1, \eeq
\end{enumerate}
where $A,~B,~C\in\EA$ are disjoint elements of $\EA$, as indicated by the symbol `$\sqcup$' for disjoint union. Given a histories theory, we can construct the following series of symmetric set functions,which are sometimes referred to as the Sorkin hierarchy\footnote{These are the generalised interference terms introduced in \cite{Sorkin:1994dt}}:
\begin{align*}
        I_1(X) & \equiv \mu(X), \\
      I_2(X,Y) & \equiv \mu(X\sqcup Y) - \mu(X) - \mu(Y), \\
  I_3(X, Y, Z) & \equiv \mu(X\sqcup Y \sqcup Z) -
      \mu(X\sqcup Y) - \mu(Y\sqcup Z) - \mu(Z\sqcup X) \\
{}&\ + \mu(X) + \mu(Y) + \mu(Z),
\end{align*}
and so on, where $X$, $Y$, $Z$, {\it etc.\ }are disjoint. A histories theory is \emph{of level $k$} (alternatively it is a {\it measure theory of level $k$}) if the sum rule $I_{k+1}=0$ is satisfied. It is known that this condition implies that all higher sum rules are automatically satisfied, namely $I_{k+n}=0$ for all $n\geq 1$ \cite{Sorkin:1994dt}.

A level $1$ theory is thus one in which the measure satisfies the usual Kolmogorov sum rules of classical probability theory, $\P(A\sqcup B) = \P(A) + \P(B)$, classical Brownian motion being a good example. We refer to level one theories as \emph{classical theories}.  A level $2$ theory is one in which the Kolmogorov sum rules may be violated but $I_3$ is nevertheless zero. Any unitary quantum theory can be cast into the form of a generalised measure theory and its measure satisfies the condition $I_3 = 0$, which trivially implies equation \ref{eq:quantum measure sum rule}. We refer to level 2 theories, therefore, as \emph{quantum measure theories}. It is known that whenever a system can be described by standard (Copenhagen) quantum mechanics via the usual Hilbert space construction (through what we will call a \emph{Hilbert space theory}), we can also find a quantum measure theory describing the system. We say that the histories theory is derived from the Hilbert space theory; the quantum measure defined using Feynman path integrals \cite{Isham:1994uv}.

Notice that the replacement of the Kolmogorov sum rule with equation \ref{eq:quantum measure sum rule} has significant consequences for the null set structure (a null set is a set of measure zero); in particular because of (destructive) interference negligible sets (subsets of null sets) are no longer null in general. Unfortunately we lack general results about the structure of null sets, which is a hinderance to the development of our interpretations of quantum measure theory. There is recent research activity in this area, for example work on the implications of the existence of `antichains' of null sets \cite{Surya:2008}.

In what follows, unless explicitly stated otherwise, we shall assume all histories theories to be level $2$. Further, because of the current uncertainty as to the correct general construction of $\mu$ and $\EA$ in the case of an infinite sample space, we shall unless specifically mentioned otherwise, assume that all sample spaces are finite and that $\EA=P\O$.

The existence of a quantum measure, $\mu$, is more or less equivalent \cite{Sorkin:1994dt} to the existence of a
{\it decoherence functional}, $D(\,\cdot\,\,,\,\cdot\,)$, a complex function on $\EA\times \EA$ satisfying \cite{Hartle:1989,Hartle:1992as}:

\noindent (i) Hermiticity: $D(X\,,Y) = D(Y\,,X)^*$ ,\  $\forall X, Y \in \EA$;

\noindent (ii) Additivity: $D(X\sqcup Y\,, Z) = D(X\,,Z) + D(Y\,,Z)$ ,\
   $\forall X, Y, Z \in \EA$ with $X$ and $Y$ disjoint;

\noindent (iii) Positivity: $D(X\,,X)\ge0$ ,\  $\forall X\in \EA$;
% \noindent (iii) (Weak) positivity: $D(X\,;X)\ge0$ ,\  $\forall X\in \EA$;

\noindent (iv) Normalization: $D(\Omega \,,\Omega)=1$.\footnote{The normalisation
condition may turn out not to be necessary, but we include it
because all the
quantum measures we consider in this paper will satisfy it.}

\noindent The quantum measure is related to the decoherence functional by
\begin{equation}
\mu(X) = D(X \,, X)\quad \forall X \in \EA \,.
\end{equation}
The quantity $D(X,Y)$ is interpretable as the quantum interference between two sets of histories in the case when $X$ and $Y$ are disjoint.

Finally, we end this section by introducing the important concept of \emph{coarse graining} \cite{Hartle:1992as,Isham:1994uv}, which is crucial to the study of subsystems \& emergent classicality.

\begin{definition}
Let $\H$ be a histories theory with a finite sample space. We say that a histories theory $(\la,\EA_\la,\mu_\la)$ is a \textbf{coarse graining} of $\H$ if:
\begin{enumerate}
  \item $\la$ is a partition of $\O$
  \item $\EA_\la = P\la$
  \item $\mu_\la(A) = \mu(A)~~\forall~A\in\EA_\la$
\end{enumerate}
We refer to $\H$ as a \textbf{fine graining} of $(\la,\EA_\la,\mu_\la)$. In this context we sometimes refer to the histories $\g\in\O$ as \textbf{fine grained histories} to distinguish them from the \textbf{coarse grained} $A\in\O_\la$.
\end{definition}

\subsubsection{Classicality and Observation}\label{sec:dynamical classicality}

We need to relate this framework to our (experimental) observations, and to develop the ability to make falsifiable predictions. Now we envision that given a histories theory $\H$, some of the events in $\EA$ may be `observable'; so that we can determine their truth or falsity though our experience. However not all the elements of $\EA$ will in general be observable, leading us enquire into the properties the set of observable events.

In `Copenhagen quantum mechanics' described by a `Hilbert space theory' we assume that the `observables' are given; determined by an assumed measuring apparatus that is external to the system. Further, we can assign (Copenhagen) probabilities to each of these events; these probabilities constitute the predictions of the theory\footnote{Though as we shall see a `probability statement' such as $\P(A)=p$ is not in general falsifiable through a single trial; we must construct a methodology for testing such assertions by relating them to statements that are falsifiable.}. If a histories theory $\H$ is derived from such a Copenhagen framework, then we inherit a set of `observable events', each of which is an element of an `observable partition' (of $\O$) whose elements are all observable events. Further, it can be shown that such partitions are \emph{decoherent}; so that the decoherence functional $D$ associated with the measure obeys $D_\la(X,Y)=0$ whenever $X,Y\in\la$ and $X\neq Y$ \cite{Hartle:1992as}. It can be shown that this implies that the coarse grained measure $\mu_\la$ obeys the axioms of a probability measure, and that $\mu_\la(A)$  is equal to the `Copenhagen probability' of the event $A$. These `probability statements' regarding observable events form the predictive content of such theories.

However for a general quantum measure theory we do not have a fully developed framework for dealing with observable events. Further, one of the key themes of the histories approach is observer independence; observations are certainly not observer independent. This leads us to `embed' the notion of (experimentally) observable events into the wider (but also at present ill defined) notion of `classicality'; which we require to be observer independent. Thus we wish to distinguish some partitions as being `classical' in some objective and observer independent manner; these \emph{classical partitions} are to possess the properties we require of observable partitions, and are to be interpreted in the same manner. The idea is that the set of classical partitions will contain the set of observable partitions; thus the `classical' behaviour of observed events is to be thought of an instance of an objective and observer independent phenomenon.

However, as alluded to above we are as yet uncertain what this broader phenomenon of classicality should be. Perhaps the simplest suggestion is that of \emph{dynamical classicality}, or \emph{decoherence}. As we saw above, `Copenhagen' observable partitions are decoherent; further, when the measure is classical the whole sample space is decoherent. This leads us to the notion of dynamical classicality, which identifies classicality with decoherence; thus a partition $\la$ of $\O$ is dynamically classical if and only if $D(X,Y)=0$ for all $X,Y\in\la$ such that $X\neq Y$. As before, it can be shown that this implies that the coarse grained measure $\mu_\la$ obeys the axioms of a probability measure. The notion of \emph{approximate decoherence}, $D(X,Y)<\e<<1$ for all $X,Y\in\la$ such that $X\neq Y$, is sometimes offered as an alternative to decoherence. In particular, in the study of environmental decoherence, observed `emergent classical behaviour' is due to approximate decoherence \cite{GellMann:1992kh,Dowker:1994dd}. However for the purposes if this paper we will focus on the `full' decoherence condition. Thus, given a general quantum measure theory, we will assume that we can identify the observable events, that each observable event participates in an observable partition, and that all observable partitions are dynamically classical.

Now given a single dynamically classical partition $\la$, we can treat the coarse grained histories theory $(\la,\EA_\la,\mu_\la)$ as a classical theory and apply our usual interpretation; identifying a single history as real. We are then able to simultaneously assign truth values (in $\Z2$) to every element in $\la$. However, when the underlying theory $\H$ is not itself classical, we may have various \emph{incompatible} (or \emph{incomparable}) partitions \cite{Griffiths:1996}; we are not able to simultaneously assign truth values to all events in two incompatible partitions using the classical `one history is real' interpretation \cite{Dowker:1994dd}. The \emph{consistent histories} interpretation addresses this issue by interpreting every `classical proposition' (an event that is an element of a dynamically classical partition) as \emph{conditional} on the (`largest') dynamically classical partition in which it participates. Essentially, it is only meaningful to assign truth valuations to classical partitions, and this must be done within the context of a given dynamically classical (or \emph{consistent}) partition; two propositions can only be simultaneously answered if they both participate in a common consistent partition \cite{Dowker:1994dd,Griffiths:1996,Griffiths:1998}.

Sorkin \cite{Sorkin:2006wq,Sorkin:2007} has introduced a new interpretation of quantum measure theory that can assign truth valuations to all elements of the event algebra simultaneously, rather than to individual consistent partitions. It is to this, \emph{co-event interpretation} that we now turn.

\subsection{The Co-Event Interpretation}\label{sec:the coevent interpretation}

\subsubsection{Co-Events for Classical Theories}\label{sec:classical coevents}

When $(\O,\EA,\P)$ is a classical theory, the `standard interpretation' identifies exactly one of the elements of $\O$ as the `real history', $\g_r$, which `actually occurs'. However the dynamics (in the form of the measure $\P$) does not uniquely identify $\g_r$; the most we can say with complete confidence is that the dynamics should not \emph{preclude} (rule out) the real history, which would occur were $\g_r$ to be an element of a null set (a set of measure zero). We thus use the term \emph{potential reality} in referring to a history which is not an element of a null set, and denote the set of such histories the \emph{space of potential realities}. To avoid confusion with the potential realities, we will henceforth refer to the real history $\g_r$ as the \emph{actually real history}.

Recalling that we can think of the event algebra $\EA$ as the set of propositions that can be made about the system, notice that knowledge of which potential reality is actually real allows us to give true/false answers to each of these propositions. More concretely, we say that an event is `true' if it contains the actually real history $\g_r$, otherwise it is `false'. Identifying our space of truth valuations with $\Z2$ (where $1$ is associated with `yes'/`true' and $0$ with `no'/`false'), we can use the history $\g_r$ to construct a truth valuation map, or \emph{co-event}:
\bea
\g_r^*:\EA &\rightarrow& \Z2, \nonumber \\
\g_r^*(A) &=& \left\{\ba{cc} 1 & \g_r\in A \\ 0 & \g_r \not\in A. \ea\right. \label{eq:classical coevent action}
\eea
We can then phrase our requirement that no potential reality should lie in a null set in the language of truth valuation maps, so that $\g\in\O$ is a potential reality whenever
\beq\label{eq:classical preclusion}
\P(A) = 0 \Rightarrow \g^*(A) = 0,
\eeq
a condition which we will refer to as \emph{preclusion}. We will refer to co-events obeying equation \ref{eq:classical preclusion} as \emph{preclusive}.

Sorkin has suggested a subtle shift in thinking whereby the co-events $\g^*$ should be regarded as real rather than the histories $\g$ \cite{Sorkin:2006wq,Sorkin:2007,Dowker:2007zz}. Thus given a classical histories theory $(\O,\EA,P)$ our space of potential realities is now the set of preclusive $\g^*$ where $\g\in\O$.

This shift in thinking is non-trivial because co-events have a natural algebraic structure which we can use to generalise them from the classical case. As previously noted, $\EA$ is both a ring and an algebra over $\Z2$ with addition identified with symmetric difference and multiplication with intersection. As a field, $\Z2$ is also a ring (and could be thought of as an algebra over $\Z2$), inheriting addition and multiplication from $\ZZ$. When the sample space is finite, it can be shown that the co-events $\g^*$ are precisely the (ring) homomorphisms from $\EA$ to $\Z2$.

\begin{lemma}\label{lemma:classical coevents are homomorphisms}
Let $\H$ be a histories theory with a finite sample space. Then
$$\{\g^*|\g\in\O\} = Hom(\EA,\Z2),$$
where $Hom(\EA,\Z2)$ is the set of (ring) homomorphisms from $\EA$ to $\Z2$ excluding the zero map.
\end{lemma}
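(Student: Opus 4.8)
The plan is to prove the two set inclusions separately. For the forward inclusion, $\{\g^*\mid\g\in\O\}\subseteq Hom(\EA,\Z2)$, I would take an arbitrary $\g\in\O$ and verify directly that $\g^*$ respects the ring operations. Since addition in $\EA$ is symmetric difference, $\g^*(A+B)=1$ precisely when $\g$ lies in exactly one of $A,B$, which is exactly the value of $\g^*(A)+\g^*(B)$ computed in $\Z2$; and since multiplication is intersection, $\g^*(AB)=1$ precisely when $\g\in A$ and $\g\in B$, matching $\g^*(A)\g^*(B)$. As $\g^*(\O)=1$ (because $\g\in\O$), the map is nonzero, so indeed $\g^*\in Hom(\EA,\Z2)$.

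The substance lies in the reverse inclusion, $Hom(\EA,\Z2)\subseteq\{\g^*\mid\g\in\O\}$. Here I would exploit the finiteness of $\O$, which makes $\EA=P\O$ an atomic Boolean ring whose atoms are the singletons $\{\g\}$, $\g\in\O$, and whose multiplicative identity is $\O$. The two structural facts I would use are: (a) distinct singletons satisfy $\{\g_i\}\cap\{\g_j\}=\emptyset$, and $\emptyset$ is the additive identity, so any homomorphism sends it to $0$; and (b) since the singletons are pairwise disjoint, their symmetric-difference sum is their union, giving $\sum_{\g\in\O}\{\g\}=\O$.

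Given a nonzero $\phi\in Hom(\EA,\Z2)$ (note that ``excluding the zero map'' indicates the homomorphisms are not assumed unital, so this must be argued), I would first show $\phi(\O)=1$: choosing any $A$ with $\phi(A)=1$, the identity $A=A\cap\O$ gives $\phi(A)=\phi(A)\phi(\O)$, forcing $\phi(\O)=1$. Applying $\phi$ to fact (a) yields $\phi(\{\g_i\})\phi(\{\g_j\})=\phi(\emptyset)=0$ for $i\neq j$, so \emph{at most one} singleton is sent to $1$. Applying $\phi$ to fact (b) gives $\sum_{\g\in\O}\phi(\{\g\})=\phi(\O)=1$ in $\Z2$, so an \emph{odd} number of singletons are sent to $1$. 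Together these force \emph{exactly one} atom, say $\{\g_0\}$, to satisfy $\phi(\{\g_0\})=1$. Finally, writing an arbitrary $A\in\EA$ as the disjoint sum $A=\sum_{\g\in A}\{\g\}$ and evaluating termwise gives $\phi(A)=\sum_{\g\in A}\phi(\{\g\})$, which is $1$ exactly when $\g_0\in A$; hence $\phi(A)=\g_0^*(A)$ for every $A$, so $\phi=\g_0^*$.

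The main obstacle---essentially the only non-routine point---is the counting argument that isolates a \emph{unique} atom: one must combine the multiplicative constraint (at most one atom survives) with the additive/normalisation constraint (an odd number survive) to conclude that exactly one does. The finiteness of $\O$ is indispensable here, since it is what guarantees the atomic decomposition $A=\sum_{\g\in A}\{\g\}$ is a finite sum on which $\phi$ can be applied termwise; without it the atomic structure and this decomposition can fail.
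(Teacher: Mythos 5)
Your proof is correct, and in fact the paper offers nothing to compare it against: the lemma is stated there without proof, prefaced only by ``it can be shown that the co-events $\g^*$ are precisely the (ring) homomorphisms from $\EA$ to $\Z2$.'' Your argument is the natural one that fills this gap. Both directions check out: the forward inclusion is the routine verification that evaluation at $\g$ respects symmetric difference and intersection, and the reverse inclusion correctly isolates a unique atom by combining the multiplicative constraint (disjoint singletons give $\phi(\{\g_i\})\phi(\{\g_j\})=\phi(\emptyset)=0$, so at most one singleton maps to $1$) with the additive one ($\sum_{\g\in\O}\phi(\{\g\})=\phi(\O)=1$, so an odd number do). You were also right to flag that unitality cannot be assumed --- the phrase ``excluding the zero map'' only makes sense if homomorphisms are not required to send $\O$ to $1$ --- and your derivation of $\phi(\O)=1$ from $\phi(A)=\phi(A)\phi(\O)$ handles this cleanly. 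Finally, your closing remark correctly identifies where finiteness enters: the termwise evaluation of $\phi$ on the atomic decomposition $A=\sum_{\g\in A}\{\g\}$ requires the sum to be finite, which is exactly why the lemma carries the finite-sample-space hypothesis.
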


Thus the co-events $\g^*$ can be characterized algebraically by the fact that they are (non-zero) homomorphisms, which is equivalent to their adherence to the following `rules':
\bea
\g^*(A+B) &=& \g^*(A) + \g^*(B) ~~(linearity), \label{eq:linearity} \\
\g^*(AB) &=& \g^*(A)\g^*(B) ~~(multiplicativity), \label{eq:multiplicativity}
\eea
for all $A,B\in\EA$.

\subsubsection{Co-Events for Quantum Theories}\label{sec:coevents for quantum theories}

The co-events introduced in the previous section are tied to individual histories. We can generalise this structure to encompass a wider class of answering map.

\begin{definition}\label{def:coevent}
Let $\H$ be a histories theory. Then a map
$$\p:\EA\rightarrow\Z2,$$
is a \textbf{co-event} if it maps the empty set to zero. We denote the space of co-events by $\CE$. A co-event is \textbf{preclusive} if
$$\mu(A)=0\Rightarrow\p(A)=0~~\forall~A\in\EA.$$
\end{definition}

It is clear that when the underlying histories theory is classical, the above definition of preclusiveness reduces to the one given in equation \ref{eq:classical preclusion}. Because of this, we can define the `classical' co-events within this more general setting, describing any co-event (other than the zero map) as \emph{classical} if it obeys linearity \& multiplicativity (equations \ref{eq:linearity} \& \ref{eq:multiplicativity}). Lemma \ref{lemma:classical coevents are homomorphisms} means that the classical co-events are all the co-events of the form $\g^*$ for some $\g\in\O$, as described above. Given a histories theory $\H$, we refer to the set of preclusive classical co-events as $\C$.

Perhaps the most obvious means of applying co-events to quantum measure theories would be to use the classical structure unaltered; thus given a histories theory $\H$ our space of potential realities would be $\C$. However it is possible to show that classical co-events are an inadequate ontology for quantum dynamics; in particular the Kochen-Specker theorem \cite{Kochen:1967}, often cited as an obstruction to realism, and its proof by Asher Peres \cite{Peres:1991}, can be used to construct a `gedanken-experimentally realisable' system that can not be described by any classical co-event \cite{Dowker:2007zz}. This leads to:

\begin{theorem}\label{thm:PKS}
There exists a `gedanken-experimentally realisable' histories theory $\H$ with a finite sample space such that
$$\C=\emptyset.$$
\end{theorem}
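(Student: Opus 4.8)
The plan is to realise the obstruction contained in the Kochen--Specker theorem as a \emph{preclusion} statement inside a finite quantum measure theory, and then to use Lemma~\ref{lemma:classical coevents are homomorphisms} to identify the preclusive classical co-events with consistent Kochen--Specker valuations. Since Peres's proof \cite{Peres:1991} exhibits a \emph{finite} configuration (the $33$ rays in three dimensions associated with a spin-$1$ system) admitting no such valuation, this will force $\C=\emptyset$. The essential point is that each classical co-event is, by Lemma~\ref{lemma:classical coevents are homomorphisms}, a single point evaluation $\g^*$, hence encodes one global assignment of truth values to the rays; preclusivity then constrains that assignment to respect exactly the quantum sum rules that Kochen--Specker forbids.

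Concretely, I would first fix the Peres configuration and build a histories theory $\H$ whose fine grained histories $\g$ record, for each of the finitely many directions $\vec n$, the outcome $S_{\vec n}^2\in\{0,1\}$ of the associated spin measurement. Taking $\O$ to be the (finite) set of such global records lets us set $\EA=P\O$ in accordance with our standing assumptions, and the Hilbert space theory of a spin-$1$ particle supplies, via the construction of section~\ref{sec:quantum measure theory}, a level $2$ measure $\mu$. The key structural input is that quantum mechanics precludes exactly the records that violate the operator identity $S_{\vec a}^2+S_{\vec b}^2+S_{\vec c}^2=2$ holding along any orthogonal triad $\{\vec a,\vec b,\vec c\}$: any history whose values on some triad fail to sum to $2$ lies in a set of measure zero.

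Next I would translate the definition of $\C$ into this colouring problem. By Lemma~\ref{lemma:classical coevents are homomorphisms} every classical co-event is of the form $\g^*$ for a single record $\g\in\O$, and $\g^*$ is simply evaluation against the global valuation $v_\g$ that $\g$ encodes. The preclusivity condition $\mu(A)=0\Rightarrow\g^*(A)=0$ then says precisely that $v_\g$ assigns \emph{false} to every forbidden triad pattern, i.e. that $v_\g$ is a consistent Kochen--Specker colouring of the Peres rays. Hence $\C\neq\emptyset$ would furnish such a colouring. Invoking the Kochen--Specker theorem in Peres's finite form \cite{Kochen:1967,Peres:1991,Dowker:2007zz}, no consistent colouring of this configuration exists, so there is no preclusive classical co-event, and $\C=\emptyset$.

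I expect the main obstacle to be the construction in the second step: one must verify carefully that the quantum measure $\mu$ derived from the spin-$1$ Hilbert space theory genuinely sends every Kochen--Specker-forbidden record to measure zero, and that the resulting $\mu$ is a bona fide, gedanken-experimentally realisable level $2$ measure on the finite $\O$ (rather than merely a formal device). It is this precise matching of the null-set structure to the triadic colouring constraints that converts the purely algebraic impossibility of a Kochen--Specker valuation into the emptiness of $\C$, so the argument stands or falls on establishing that correspondence rigorously.
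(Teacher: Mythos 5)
Your proposal is correct and is essentially the paper's own argument: the paper's proof of Theorem~\ref{thm:PKS} simply defers to \cite{Dowker:2007zz}, and what you reconstruct --- the Peres 33-ray spin-1 configuration, a finite sample space of $\{0,1\}$-valued records on the rays, null sets enforcing the triad sum rule, and Lemma~\ref{lemma:classical coevents are homomorphisms} to reduce preclusive classical co-events to Kochen--Specker colourings, which Peres shows cannot exist --- is precisely the construction carried out in that reference. The verification you flag as the main obstacle is resolved there by observing that the class operator of any record violating a triad constraint is the zero operator (no joint eigenspace with the forbidden eigenvalue combination exists), so every such record, and hence by Kochen--Specker every singleton $\{\g\}$, lies in a null set, making the resulting level~$2$ theory gedanken-experimentally realisable as a sequence of triad measurements while leaving no preclusive classical co-event.
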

\begin{proof}
See \cite{Dowker:2007zz}.
\end{proof}

The failure of classical co-events to adequately describe quantum measure theories has led to a search for alternatives. A number of co-event `\emph{schemes}' have been proposed, typically generalising from classical co-events by omitting and/or relaxing one or both of equations \ref{eq:linearity} \& \ref{eq:multiplicativity}. Many of these proposals, such as the \emph{linear} \cite{Sorkin:2006wq}, \emph{quadratic} and \emph{general polynomial} schemes, have been conclusively ruled out \cite{CoEventSchemes}. Other proposals, such as the \emph{ideal} and \emph{multiplicative} schemes remain as viable candidates \cite{Sorkin:2007,CoEventSchemes}. For various reasons \cite{Sorkin:private,Dowker:private,CoEventSchemes}, the multiplicative approach has been adopted as the current working model of co-event theory, and it is to this scheme that we now turn.

\subsubsection{The Multiplicative Scheme}\label{sec:the multiplicative scheme}

Starting with a histories theory $\H$, a co-event $\p\in\CE$ is \emph{multiplicative} if it is not the zero map and it obeys the multiplicativity condition (equation \ref{eq:multiplicativity}):
$$\p(AB) = \p(A)\p(B)~~\forall~A,B\in\EA.$$
An important feature of multiplicative co-events is their description in terms of filter theory:
\begin{lemma}
Let $\H$ be a histories theory with a finite sample space, and let $\p\in\CE$ be a multiplicative co-event. Then $\p^{-1}(1)$ is a principal filter.
\end{lemma}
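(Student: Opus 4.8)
The plan is to show that the fiber $\p^{-1}(1)$ satisfies the defining properties of a proper filter on the Boolean algebra $\EA = P\O$, and then to invoke finiteness of $\O$ to upgrade this to principality. Write $F := \p^{-1}(1) = \{A \in \EA : \p(A) = 1\}$. A proper filter must be closed under intersections, upward closed under inclusion, nonempty, and must exclude the bottom element $\emptyset$; I will verify each in turn, drawing only on multiplicativity and the co-event condition $\p(\emptyset) = 0$.

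Closure under intersection is immediate from multiplicativity: if $A, B \in F$ then $\p(A \cap B) = \p(AB) = \p(A)\p(B) = 1$, so $A \cap B \in F$. For upward closure, the key observation is that $A \subseteq B$ is equivalent to $AB = A$ in the ring structure, since intersection is the product. Hence if $A \in F$ and $A \subseteq B$, multiplicativity gives $1 = \p(A) = \p(AB) = \p(A)\p(B) = \p(B)$, so $B \in F$. That $F$ is nonempty follows because $\p$ is by hypothesis not the zero map, so some $A$ has $\p(A) = 1$; and $\emptyset \notin F$ since $\p(\emptyset) = 0$ by definition. Upward closure then forces $\O \in F$, so $F$ contains the top element. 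These steps establish that $F$ is a proper filter.

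It remains to pass from "filter" to "principal filter," and this is where finiteness of $\O$ is essential. Since $\EA$ is finite, $F$ is a finite collection, so the meet $C := \bigcap_{A \in F} A$ is a finite intersection and therefore lies in $F$ by the finitely iterated closure under intersection established above. Thus $C$ is the least element of $F$. Upward closure then yields $F = \{A \in \EA : C \subseteq A\}$: every element of $F$ contains $C$ by construction, while conversely any $A \supseteq C$ lies in $F$ because $C \in F$. This exhibits $F$ as the principal filter generated by $C$.

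I expect no serious obstacle; the argument is a direct unpacking of the definitions. The one step warranting care is upward closure, which rests on translating the inclusion $A \subseteq B$ into the multiplicative identity $AB = A$, and which illustrates that in this framework multiplicativity alone, without linearity, already delivers both filter axioms. The role of the finiteness hypothesis is equally worth flagging: it is exactly what guarantees that the intersection defining the generator $C$ ranges over finitely many sets and hence remains inside $F$, so that the filter is principal rather than merely a filter.
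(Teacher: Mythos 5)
Your proof is correct and follows essentially the same route as the paper: both establish upward closure via the identity $AB = A$ for $A \subseteq B$ and closure under intersection via multiplicativity, then invoke finiteness of $\EA$ to produce the generating element. The only cosmetic difference is the final step, where you construct the principal element directly as $\bigcap_{A \in \p^{-1}(1)} A$ (in the filter by finitely iterated intersection-closure), whereas the paper takes a minimal element and proves its uniqueness by intersecting two putative minimal elements; both arguments rest on exactly the same finiteness-plus-intersection-closure mechanism.
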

\begin{proof}
Let $A \in \p^{-1}(1)$. If $A \subset B$ then $AB = A$ and therefore $1 = \p(A) = \p(AB) = \p(A)\p(B) = \p(B)$, which means that $B \in \p^{-1}(1)$. Further, if $A,B\in\p^{-1}(1)$ then $\p(AB)=\p(A)\p(B)=1$, so $A\cap B\in\p^{-1}(1)$; thus $\p^{-1}(1)$ is a filter.

Now because $\O$ and therefore $\EA$ are finite $\p^{-1}(1)$ must contain a minimal element (under the partial order defined by set inclusion), which we will denote $\p^*\in\EA$. Further $\p^*$ must be unique, otherwise if $A\in\p^{-1}(1)$ is also minimal then $A\cap\p^*$ is in the filter $\p^{-1}(1)$ and is contained in both $A$ and $\p^*$, contradicting the assumption that they are minimal. Therefore $\p^{-1}(1)$ is a principal filter and $\p^*$ its principal element.
\end{proof}

The association of a multiplicative co-event $\p$ with the principal element $\p^*$ of the related filter $\p^{-1}(1)$ can be described as a map:
\bea
*:\CE &\rightarrow&\EA, \nonumber \\
*:\p &\mapsto& \p^*.
\eea
We can extend $*$ to an involution on $\EA\times\CE$ by defining
\bea
*:\EA &\rightarrow&\CE, \nonumber \\
*:A &\mapsto& A^*,
\eea
where for all $B\in\EA$
\beq
A^*(B) = \left\{\begin{array}{cc} 1 & {\text{if} ~A \subset B} \\
                                  0 & {\text{otherwise}}\end{array}\right.
\eeq
It is easy to see that $*$ is bijective, and that $(\p^*)^*=\p$, $(A^*)^*=A$. We can therefore think of $*$ as a duality, and will call $\p^*$ the \emph{dual} of $\p$ (and $A^*$ the dual of $A$).

Note that every classical co-event is also a multiplicative co-event, and that $\{\g\}^*=\g^*$, so our definition of the $*$ map is consistent with our previous notation. Now the classical co-events $\g^*$ are dual to single histories $\g$, which are the potential realities (given that they are preclusive) in classical theories. This suggests that we might think of a multiplicative co-event $A^*$ as an \emph{ontological coarse graining}; rather than one history $\g$ (or co-event $\g^*$) describing reality we now have a multiplicity of histories $A$ (or the co-event $A^*$), and only the properties common to all of these histories are true.

To generalise from (preclusive) classical to (preclusive) multiplicative co-events we have dropped the linear rule (equation \ref{eq:linearity}), thus increasing the number of `allowed' co-events. Unfortunately we now have `too many' co-events; in particular given a classical measure we could have non-classical multiplicative co-events, implying non-classical behaviour of dynamically classical systems. Indeed, if the measure does not admit null sets then the dual of every event will be preclusive and multiplicative. We therefore need a further constraint that will restrict us to classical co-events given a classical measure. We say that a multiplicative co-event $\psi$ \emph{dominates} a multiplicative co-event $\p$ if
$$\p(A)=1\Rightarrow\psi(A)=1~~~\forall~A\in\EA.$$
A preclusive multiplicative co-event is \emph{primitive} (among the set of preclusive multiplicative co-events) if it is not dominated by any other preclusive multiplicative co-event. We can also motivate primitivity using the natural order on filters. A preclusive multiplicative co-event $\p$ is \emph{primitive} if and only if the associated filter $\p^{-1}(1)$ is maximal among the filters corresponding to preclusive multiplicative co-events. Primitivity can be thought of as a condition of ``maximal detail'' or ``finest graining'' consistent with preclusion. We are now in a position to fully specify the concept of potential reality in the multiplicative scheme,
\begin{definition}
Let $\H$ be a histories theory with a finite sample space. Then we denote the set of primitive preclusive multiplicative co-events by $\M$.
\end{definition}
Comparing the multiplicative scheme with our previous results for classical theories, it is $\M$ which replaces $\C$ as the space of potential realities. In this scheme individual histories (or equivalently classical co-events) may no longer be potentially real, and thus can not be expected to necessarily obey the logic describing reality. Notice that we use `standard' (Boolean) logic in our reasoning concerning the co-events themselves.

Recall that classical co-events were discarded because we could find gedanken-experimentally realisable histories theories that no preclusive classical co-event could describe (theorem \ref{thm:PKS}). We are therefore reassured by the following,

\begin{lemma}\label{lemma:mult existence primitivity}
Let $\H$ be a history theory with a finite sample space and let $A\in\EA$ be non-negligible (ie $A$ is not a subset of a null set). Then $\exists \p\in\M$ such that $\p(A)=1$.
\end{lemma}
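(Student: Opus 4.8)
The plan is to prove the existence of a primitive preclusive multiplicative co-event $\p$ with $\p(A)=1$ by starting from an easily-constructed preclusive multiplicative co-event satisfying $\p(A)=1$ and then passing to a dominating primitive one. First I would exhibit a concrete candidate: since $A$ is non-negligible, $\mu(A)\neq 0$, so $A$ itself is not a null set, and I claim the dual $A^*$ is a preclusive multiplicative co-event with $A^*(A)=1$. Multiplicativity of $A^*$ follows from the general properties of the $*$ map established above (every $A^*$ is multiplicative), and $A^*(A)=1$ holds since $A\subset A$. The key point is preclusiveness: I must check that $\mu(B)=0$ implies $A^*(B)=0$. Now $A^*(B)=1$ exactly when $A\subset B$; if $A\subset B$ and $\mu(B)=0$ then $A$ is a subset of a null set, contradicting non-negligibility. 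Hence $A^*(B)=0$ whenever $\mu(B)=0$, so $A^*$ is preclusive.

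Having produced \emph{some} preclusive multiplicative co-event $\p_0:=A^*$ with $\p_0(A)=1$, the second step is to upgrade it to a primitive one. I would invoke the dominance order on preclusive multiplicative co-events (equivalently, the inclusion order on the associated filters $\p^{-1}(1)$). Since $\O$ is finite, $\EA=P\O$ is finite, and hence there are only finitely many co-events in $\CE$ altogether. Therefore the set of preclusive multiplicative co-events dominating $\p_0$ is a finite, non-empty partially ordered set, and a maximal element $\p$ exists. By the characterisation of primitivity given above, a preclusive multiplicative co-event that is maximal under dominance (equivalently whose filter is maximal) is precisely a primitive one, so $\p\in\M$.

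The final step is to verify that this primitive $\p$ still satisfies $\p(A)=1$. This is immediate from the definition of dominance: $\p$ dominates $\p_0$, so $\p_0(A)=1$ implies $\p(A)=1$. This completes the argument.

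I expect the main obstacle to be the careful handling of the preclusiveness of $A^*$ and the precise interplay between the two distinct orderings at play. One must be attentive that ``non-negligible'' means $A$ is not contained in \emph{any} null set (not merely that $\mu(A)\neq 0$), since it is exactly this stronger statement that rules out $A\subset B$ with $\mu(B)=0$ and thereby secures preclusiveness. A secondary subtlety is confirming that dominance of co-events corresponds correctly to inclusion of their filters $\p^{-1}(1)$, so that ``maximal under dominance'' and ``maximal filter'' genuinely coincide with the stated definition of primitivity; once this correspondence is in hand, finiteness of $\EA$ does all the remaining work via a standard maximal-element argument, with no need for Zorn's lemma.
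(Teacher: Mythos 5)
Your proof is correct and follows essentially the same route as the paper: both start from the dual $A^*$, whose preclusiveness is exactly what non-negligibility guarantees, note that $A^*(A)=1$ and that domination preserves the value $1$ on $A$, and then climb the domination order using finiteness of $\CE$. The only difference is presentational --- the paper argues by contradiction (an infinite strictly dominating chain of preclusive co-events would violate finiteness), while you extract a maximal element of the finite poset directly, which is if anything slightly cleaner since it makes explicit the antisymmetry of domination via the filter correspondence.
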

\begin{proof}
Assume that $\nexists~\p\in\M$ such that $\p(A)=1$. Because $A$ is non-negligible we know that $A^*$ is preclusive. Then since $A^*(A)=1$, by assumption $A^*$ can not be primitive, so there exists some preclusive $\psi_1$ dominating $A^*$, and by the definition of domination $\psi_1(A)=1$. The same argument shows that $\psi_1$ can not be primitive, so there exists a preclusive $\psi_2$ dominating $\psi_1$ with $\psi_2(A)=1$. Carrying on in this fashion we can find an infinite sequence of preclusive co-events $\{\psi_i\}_{i=0}^\infty$ with $\psi_0=A^*$, with $\psi_i$ dominating $\psi_{i-1}$ (and thus $\psi_{i-n}$) and with $\psi_i(A)=1$. However this contradicts our assumption that $\O$, and therefore $\EA$ and $\CE$, are finite; hence the result.
\end{proof}

Note that $\mu(\O)=1$ in any histories theory, thus whenever $\O$ is finite we know that $\M\neq\emptyset$. Further, we can show that primitivity is indeed successful in ensuring that given a classical measure our space of potential realities consists of classical co-events.

\begin{lemma}\label{lemma:weak emergent classicality}
Let $\H$ be a history theory with a finite sample space and a classical (level $1$) measure, then $\M=\C$.
\end{lemma}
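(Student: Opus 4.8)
The plan is to prove the two inclusions $\C\subseteq\M$ and $\M\subseteq\C$ simultaneously, by showing that both sets coincide with the collection $\{\g^*\,:\,\mu(\{\g\})>0\}$ of duals of positive-measure singletons. The key tool is the duality $*$ between multiplicative co-events and (nonempty) events established above: every multiplicative co-event is of the form $A^*$ for a unique $A\in\EA$, so I would first translate the two defining conditions of $\M$---preclusivity and primitivity---into statements about the principal elements $A$. For preclusivity, note that $A^*$ is preclusive precisely when $A$ is non-negligible, since $A^*(N)=1$ iff $A\subseteq N$. For primitivity I would compute that $C^*$ dominates $A^*$ iff $C\subseteq A$: the forward direction follows by evaluating the domination condition $A^*(B)=1\Rightarrow C^*(B)=1$ at $B=A$, and the reverse is immediate. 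Hence $A^*$ is primitive among preclusive multiplicative co-events exactly when $A$ is \emph{minimal} among the non-negligible elements of $\EA$.

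It then remains to identify, in a classical theory, the non-negligible sets and the minimal ones among them; this is the only place the level-$1$ hypothesis enters. Finite additivity of the classical measure gives $\mu(A)=\sum_{\g\in A}\mu(\{\g\})$, so that $\mu(A)=0$ iff every history in $A$ is null, which in turn shows that negligibility and nullity coincide: $A$ is non-negligible iff $\mu(A)>0$ iff $A$ contains at least one history $\g$ with $\mu(\{\g\})>0$. Consequently any non-negligible $A$ with $|A|\geq2$ properly contains a non-negligible singleton $\{\g\}$ and so is not minimal, whence the minimal non-negligible sets are exactly the singletons $\{\g\}$ with $\mu(\{\g\})>0$ (the only proper subset of such a singleton being $\emptyset$, which is null and whose dual fails to be a co-event). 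Using $\{\g\}^*=\g^*$, this identifies $\M$ with $\{\g^*\,:\,\mu(\{\g\})>0\}$.

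Finally I would match this against $\C$. By Lemma \ref{lemma:classical coevents are homomorphisms} the classical co-events are precisely the maps $\g^*$, and $\g^*$ is preclusive iff $\{\g\}$ lies in no null set, i.e. iff $\mu(\{\g\})>0$; thus $\C=\{\g^*\,:\,\mu(\{\g\})>0\}$ as well, giving $\M=\C$. I do not expect a genuine obstacle here---the argument is essentially a dictionary between filters and their principal elements---but the step that carries the real content, and the one to state most carefully, is the additivity argument identifying minimal non-negligible sets with positive-measure singletons, since this is exactly where the failure of additivity in a general (level-$2$) measure would allow minimal non-negligible sets to be genuinely larger than singletons and thus admit non-classical primitive co-events.
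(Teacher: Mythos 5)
Your proof is correct, but it takes a genuinely different route from the paper's. You classify $\M$ outright: using the duality you translate preclusivity into non-negligibility of the dual event and primitivity into minimality of the dual among non-negligible events (via the observation that $C^*$ dominates $A^*$ iff $C\subseteq A$), and then you invoke additivity of the level-$1$ measure to identify the minimal non-negligible sets with the positive-measure singletons, so that $\M$ and $\C$ both coincide with $\{\g^*\,:\,\mu(\{\g\})>0\}$. The paper instead argues directly on an arbitrary $\p\in\M$: the Kolmogorov sum rule implies the union $Z$ of all null sets is itself null, so preclusivity forces $\p^*\not\subset Z$; picking $\g\in\p^*\setminus Z$ gives a preclusive classical co-event $\g^*$ that dominates $\p$ (or equals it), and primitivity then forces $\p=\g^*$, yielding $\M\subseteq\C$; the reverse inclusion is dispatched in one sentence. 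What your approach buys is completeness and transparency: the paper's closing remark that ``any homomorphism is a multiplicative co-event'' only shows elements of $\C$ are preclusive and multiplicative, leaving the (easy, but necessary) verification of their primitivity implicit, whereas your characterization of primitivity as minimality of the dual delivers both inclusions at once; your version also makes explicit exactly where the level-$1$ hypothesis enters, namely that additivity forbids non-negligible sets from having only negligible proper subsets, which is precisely what fails for level-$2$ measures. What the paper's approach buys is brevity: one application of the ``union of null sets is null'' fact and the domination/primitivity mechanism suffices, with no need to classify $\M$ explicitly.
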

\begin{proof}
Let $\p\in \M$. The Kolmogorov sum rule implies that the union, $Z$, of null sets in $\EA$ is itself null, so the preclusivity of $\p$ implies $\p^*\not\subset Z$. Thus $\exists \g \in \p^*\setminus Z$ such that  $\g$ is not an element of any null set. This means $\g^*$ is preclusive, but then $\g\in\p^*$ means that either $\g^*$ dominates $\p$, or that $\p=\g^*$. But by assumption $\p$ is primitive, and so is not dominated by any preclusive co-event. Therefore $\p=\g^*$, a homomorphism, and $\p\in\C$. Further, any homomorphism is a multiplicative co-event, thus $\M=\C$.
\end{proof}

Now, as touched on in section \ref{sec:dynamical classicality}, the histories approach typically assumes that dynamical classicality is an `emergent' phenomena, so that classical histories theories are to be thought of as coarse grainings of `deeper' quantum measure theories. Thus the `measure of the universe' is level $2$ whereas we access it through dynamically classical partitions of the sample space on which the induced measure is level $1$. We would therefore like to see a stronger version on lemma \ref{lemma:weak emergent classicality} which ensured that (given $\H$) all $\p\in\M$ `behave classically' on `classical partitions'. Of course whether or not we can achieve such a result depends upon how we define these two terms, `classical behaviour' turns out to be the simpler of the two:

\begin{definition}
Let $\H$ be a histories theory with a finite sample space and let $\la$ be a partition of $\O$. Then we say that a co-event $\p\in\CE$ is \textbf{classical on} $\la$ if the restriction of $\p$ to the subalgebra of $\EA$ generated by $\la$ is a homomorphism.
\end{definition}

As discussed in section \ref{sec:dynamical classicality}, defining exactly what we mean by a `classical partition' remains an open area of research. On the one hand given a histories theory $\H$ we would like every $\p\in\M$ to be classical on every classical partition of $\O$, on the other hand we want to be able to regard classical partitions as `dynamically classical' in some sense; at the very least partitions corresponding to experimentally observable alternatives should be classed as classical. Perhaps the most obvious way forward would be to regard all decoherent partitions as classical, however this condition turns out to be too strong. At the very least, because we observe and reason with Boolean logic we require our potentially real co-events $\p\in\M$ to treat classical partitions classically. We call this condition \emph{ontological classicality} or \emph{classicality with respect to $\M$} to distinguish it from both dynamical classicality and the more general notion of a classical partition, which we may require to be stronger.

\begin{definition}
Let $\H$ be a histories theory with a finite sample space. We say that a partition $\la$ of $\O$ is \textbf{classical with respect to $\M$} if every $\p\in\M$ is classical on $\la$.
\end{definition}

In appendix \ref{appendix:principle classical partition} we demonstrate the existence of a \emph{principle classical partition} $\la_C$ which all $\p\in\M$ treat classically; further any partition on which any $\p\in\M$ is classical is a sub-partition of $\la_C$.

Relating this back to the dynamics, Sorkin has proposed a new dynamical condition, \emph{preclusive separability} \cite{Sorkin:private}, that ensures classicality with respect to $\M$. A partition $\la$ of $\O$ is preclusively separable if for any null set $Z\in\EA$ and any partition element $A\in\la$, either $Z\cap A=\emptyset$ or $\exists$ a null set $Z_A\subset A$ such that $Z_A\supset A\cap Z$.

The class of preclusively separable partitions includes all partitions in which any two histories belonging to distinct partition elements `end' at different locations at the final time (or equivalently correspond to different final time projectors in their class operators, if the histories theory is derived from a Hilbert space theory). Thus in any histories theories derived from a Hilbert space theory, any partition corresponding to the alternative outcomes of a particular Copenhagen measurement (thus corresponding to `orthogonal projectors') obeys preclusive separability \cite{Sorkin:private,Dowker:private}, so the fact that preclusively separable partitions are always classical with respect to $\M$ is a powerful result. This remains an open area of research.

\subsubsection{Co-Events for Coins}

We illustrate the use of co-events with a simple example. Consider a single toss of a coin; we assume that the outcome will be either `heads' (`h') or `tails' (`t'), so our sample space is:
$$\O = \{h,t\},$$
and our event algebra
$$\EA = P\O = \{\emptyset,\{h\},\{t\},\O\}.$$
We assume no interference between the two outcomes, so that our measure is classical
\bea
\mu(\emptyset) &=& 0, \nonumber \\
\mu(\{h\}) &=& p, \nonumber \\
\mu(\{t\}) &=& 1-p, \nonumber \\
\mu(\O) &=& 1. \nonumber
\eea
Then assuming that $p\not\in\{0,1\}$ we have
$$\C = \{h^*,t^*\},$$
and the multiplicative co-events are:
$$h^*,t^*,\O^*.$$
Notice that $\emptyset^*$ is not included because it is not a co-event (since no co-event maps $\emptyset$ to unity). Now assuming that $p\not\in\{0,1\}$ all three of the above co-events are preclusive, however both $h^*$ and $t^*$ dominate $\O^*$, while being themselves primitive. Then we have
$$\M = \C,$$
as we would expect from lemma \ref{lemma:weak emergent classicality}. Nevertheless, under non-classical measures non-classical co-events similar to $h^*t^*$ are often primitive. It is thus instructive to delve further into the non-classicality of $\O^*$, which we can do by constructing its `truth table':
$$
\begin{array}{c|cccc}
 & \emptyset & \{h\} & \{t\} & \O \\
\hline
\O^* & 0 & 0 & 0 & 1
\end{array}
$$
where each table entry is the truth valuation $\O^*(A)$ of the relevant event $A$. We can rephrase this in `question \& answer' format:
\\
\\
\begin{tabular}{c|c}
\bf{Question} & \bf{Answer} \\
\hline
Does the coin show heads? & no \\
Does the coin show tails? & no \\
Does the coin show one of heads or tails? & yes \\
\end{tabular}
\\
\\
Noting that $\{h,t\}$ can be thought of as a (trivial) partition of $\O$, we could say that $\O^*$ is not classical on the `classical partition' $\{h,t\}$. This is because $\O$ has non-empty intersection with both partition elements.

\subsection{The Goals of this Paper}\label{sec:goals}

As far as co-events go our predictive power is (so far) limited. Given an event algebra and a measure, our co-event approach identifies the primitive preclusive co-events (assuming a particular scheme) as the potential realities of our theory; however we cannot in general reconstruct the measure from the set of potentially real co-events. Were we to ignore the measure and use only the potentially real co-events themselves (without knowledge of which co-event is actually real), we would be unable to make any predictive statements beyond ruling out as impossible those events that are precluded (mapped to zero) by all the potentially real co-events, and identifying as certain all those events that are mapped to unity. This may not be quite as restrictive as it appears, for by \emph{conditioning} on known truth valuations of events (disregarding the co-events that disagree with these valuations) we may be able to narrow down the set of potentially real co-events to the point that such predictions are relatively powerful \cite{Sorkin:private}.

However there is reason to believe (depending on the definition of classicality we adopt) that conditioning on previous observations will not increase our power to predict other observable events; though conditioning on non-observable events may increase our ability to predict observable events and conditioning on observable events may increase our ability to predict non-observable events.

In any case, it is clear that such conditioning will not in general allow us to recover the \emph{entirety} of either the measure or its predictive content, whatever definition of classicality we adopt; whereas we would like to be able to express the full dynamical \& predictive content of a quantum measure theory in terms of the potentially real co-events rather than the histories. We would thus be able to `start with' a set of potentially real co-events and make our predictions using only these co-events and structures defined in terms of them. In this paper we will explore attempts to achieve this; we are motivated in our endeavor by three arguments.

Firstly, we are treating the primitive co-events themselves as the potential realities of the histories approach. The authors feel that we should be able to express all meaningful statements concerning a theory in terms of the theory's potential realities; even if \emph{in practise} it is simpler to use the measure $\mu$ defined on the event algebra, at least \emph{in principle} this structure should `emerge' from a `deeper' expression of the dynamical \& predictive content of the theory in terms of the co-events themselves.

Secondly, the presence of `exactly null' sets in many gedankenexperimental frameworks (such as the double slit experiment) depends upon idealised assumptions (such as point slits) that do not hold in practice. Regardless of our interpretation, co-events as they now stand cannot distinguish between events of large or small measure, so our use of null sets to approximate `almost null' sets in idealised gedankenexperiments requires justification. As mentioned above, at present the range of predictive statements we can make with the set of potentially real co-events is narrow; were we able to fully express the predictive content of quantum measure theory in terms of the potentially real co-events we may be able to better grapple with issues such as `almost null sets' that are phrased in terms of probability rather than preclusion.

Thirdly, were we able to express in terms of co-events the predictive content of those histories theories derived from Hilbert space theories, we might then be in a position to extrapolate our findings and better understand the predictive content of a general quantum measure theory (in which, for example, there may be no `external' measuring apparatus).

In what follows we will explore two recent attempts to rephrase in terms of co-events the dynamical \& predictive content of quantum measure theories. We first consider attempts to move wholesale from the `histories theory' structure to one of `co-event theories' in which the dynamics is defined directly in terms of co-events (as anticipated in \cite{Sorkin:2007}). We then turn our attention to the important concept of `approximate preclusion', and discuss the details of its implementation.

Unless specifically noted otherwise we will assume that all sample spaces are finite, and that all histories theories are derived from Hilbert space theories (we will continue to make the assumptions described above regarding observable events). Further, as previously stated, given a histories theory we will assume that we can identify the observable events, that each observable event participates in an observable partition, and that all observable partitions are dynamically classical.

\section{Dynamics on co-events}\label{sec:dynamics on coevents}

\subsection{Placing a Measure on the Space of Co-Events}

One approach, anticipated by Sorkin in \cite{Sorkin:2007}, is to rephrase the dynamics (the quantum measure $\mu$) in terms of the co-events themselves; for example by placing a measure on the space of co-events. This would be a natural development, given that the co-events are the basic potential realities of the theory, and could lead to the rephrasing of the entirety of quantum mechanics in terms of co-events, bypassing our current need to construct dynamics upon the event algebra. We could then explain the rarity of low probability events (such as `almost null' sets) by using the dynamics on our co-events to suppress (assign a low probability to) those co-events that value them to $1$.

Perhaps the most obvious way to do this would involve literally `moving' the dynamics from the event algebra onto the co-events, which can thus realise their  status as the central objects of the theory. So far we have always begun with a histories theory $\H$ from whence we have derived our allowed co-events ${\cal{S}}\H$. Since we are asserting that ${\cal{S}}\H$ rather than $\O$ is the `true' sample space of `potential realities' it would seem more natural to place the co-events at the center of our structure, moving from a histories theory $\H$ to a full \textit{co-event theory} $(S,PS,\P_S)$, where $S$ is a set of (allowed) co-events, for example $S=\M$, $PS$ is the power set of $S$, and $\P_S$ is a probability measure on $PS$. We have assumed a probability measure, for if we were to use a more general higher order measure we would not have gained anything from our adoption of co-events in place of histories, encountering the same interpretational difficulties that led us to co-events. Indeed, following that route might even lead us to adopt co-co-events, co-co-co-events and so on. Conversely, since we are assuming that `one co-event is real', the use of a probability measure echoes the use of probability in classical stochastic theories in which we believe that `one history is real'.

In a sense this approach is born of the Bayesian approach to probability \cite{Ramsey:1926,Finetti:1930,Cox:2001}, which looks on probability as representing a degree of information concerning a system and thus ascribes meaning to individual probability statements which can therefore be treated as `logical' predicates in their own right \cite{Cox:2001}. Following this philosophy we seek to express our incomplete information about the actual reality in terms of probability statements concerning the potential realities, much as we are accustomed to do in the classical theory. We can, therefore, think of $\P_S(\{\p\})$ as the probability that the co-event $\p$ is the actually real co-event; consequently $\P_S(\{\p\})=0$ could be interpreted to mean that $\p$ is not a potential reality. Indeed, we could even enforce our choice of $S$ as the set of potential realities by extending the domain of $\P_S$ to all of $\CE$, and constraining $\P_S$ to be zero on all co-events outside $S$.

Nevertheless our observations are in terms of events in $\EA$, so we must relate this new structure back to our Copenhagen probabilistic predictions. As stated above we are focusing only on histories theories $\H$ that are derived from a Hilbert space theory; we have assumed that in such theories we can identify the set of `observable events', that each observable event is an element of an `observable partition', and that $\mu$ is classical on all such partitions. As noted above, if $A$ is an observable event, $\mu(A)$ can be interpreted as a probability, and this probability agrees with the Copenhagen predicted probability for $A$. Then at the very least we want, for all observable $A$,

\beq\label{eq:binary probs on coevents}
\mu(A)=\left\{\begin{array}{c}1\\0\end{array}\right. \Rightarrow\P_S(\{\p\in S|\p(A)=1\})=\left\{\begin{array}{c}1\\0.\end{array}\right.
\eeq

As we will discuss below, non-binary (ie not equal to $0$ or $1$) probabilities themselves require interpretation, and can arguably be defined in terms of repeated trials and binary probabilities. However we will steer clear of this issue for now, and instead of using  equation \ref{eq:binary probs on coevents} to define probabilities through repeated trials we will instead use the stronger:
\beq\label{eq:probs on coevents weak}
\P_S(\{\p\in S|\p(A)=1 \})=\mu(A),
\eeq
for all observable events $A$. However it is not clear how `observable events' should be defined in our quantum measure theory (see section \ref{sec:dynamical classicality}). We sidestep this question by simply strengthening the condition once more to require:
\beq\label{eq:probs on coevents}
\P_S(\{\p\in S|\p(A)=1 \})=\mu(A) ~\forall A\in\EA.
\eeq
One difficulty with this condition is that in general $\mu(A)$ can exceed $1$. However, even when we are able to constrain $\mu$ to take values in $[0,1]$, equation \ref{eq:probs on coevents} is a problematic condition for multiplicative co-events, as the following theorem shows.
\begin{theorem}\label{thm:dynamical coevents are quadratic}
Let $\H$ be a histories theory with a finite sample space, let $S\subset\CE$ be a set of multiplicative co-events on $\EA$, so that
$$\p(AB)=\p(A)\p(B)~\forall A,B\in\EA,~\forall\p\in S.$$
Further, let $\P_S$ be a probability measure on $S$ such that
$$\P_S(\{\p\in S|\p(A)=1 \})=\mu(A) ~\forall A\in\EA.$$
Then for any $\p\in S$, $\P_S(\{\p\})=0$ unless for all $A,B,C\in\EA$
\bea
\p(A+ B+ C) &=& \p(A+ B)+\p(B+ C)+\p(C+ A) \nonumber \\
&& +\p(A)+\p(B)+\p(C). \label{eq:quadratic rule}
\eea

\end{theorem}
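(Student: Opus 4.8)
The plan is to read the hypothesis $\P_S(\{\p\in S\mid\p(A)=1\})=\mu(A)$ as an expectation identity and then feed the quantum sum rule through it. Since $\O$ is finite, $S$ is finite, and for each $A\in\EA$ we have, treating each truth value $\p(A)\in\{0,1\}$ as a real number,
$$\mu(A)=\P_S(\{\p\in S\mid\p(A)=1\})=\sum_{\p\in S}\P_S(\{\p\})\,\p(A).$$
Thus $\mu$ is the $\P_S$-average of the maps $\p(\,\cdot\,)$. Now I would fix three \emph{disjoint} events $A,B,C$ and substitute this identity into the sum rule (equation \ref{eq:quantum measure sum rule}). By linearity of the finite sum over $\p$ this gives
$$0=\sum_{\p\in S}\P_S(\{\p\})\,g_{A,B,C}(\p),$$
where $g_{A,B,C}(\p):=\p(A\sqcup B\sqcup C)-\p(A\sqcup B)-\p(B\sqcup C)-\p(A\sqcup C)+\p(A)+\p(B)+\p(C)$ is the real-valued discrepancy for a single co-event $\p$.

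The crux is to show that $g_{A,B,C}(\p)\ge 0$ for disjoint $A,B,C$. Here I would use the filter description of multiplicative co-events: $\p(X)=1$ iff $\p^*\subseteq X$, and $\p^*\neq\emptyset$ since $\p$ is not the zero map. Because $\p^*$ is nonempty it can be contained in at most one member of any pairwise-disjoint family, so a short case analysis on which of the four regions $A$, $B$, $C$, $\O\setminus(A\sqcup B\sqcup C)$ the set $\p^*$ meets shows $g_{A,B,C}(\p)\in\{0,1\}$; indeed $g_{A,B,C}(\p)=1$ precisely when $\p^*\subseteq A\sqcup B\sqcup C$ and $\p^*$ meets each of $A,B,C$, and $g_{A,B,C}(\p)=0$ otherwise. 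Since $\P_S(\{\p\})\ge 0$ and the weighted sum vanishes, every $\p$ with $\P_S(\{\p\})>0$ must satisfy $g_{A,B,C}(\p)=0$ for \emph{every} disjoint triple. Reducing this integer identity modulo $2$ (where $-1\equiv 1$, and where $\sqcup$ coincides with $+$ on disjoint sets) already yields the quadratic rule (equation \ref{eq:quadratic rule}) for all disjoint $A,B,C$.

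It remains to upgrade from disjoint to arbitrary triples, which I expect to be the main obstacle. The vanishing of $g_{A,B,C}(\p)$ on all disjoint triples is equivalent to $\p^*$ not being splittable among three pairwise-disjoint nonempty sets, i.e. $|\p^*|\le 2$: if $|\p^*|\ge 3$ one partitions $\p^*$ into three nonempty cells $A,B,C$, for which $g_{A,B,C}(\p)=1$, a contradiction. I would then verify that $|\p^*|\le 2$ forces the quadratic rule for \emph{all} $A,B,C\in\EA$. Identifying $\EA$ with $\Z2^{|\O|}$ via indicator vectors, a multiplicative co-event is the single monomial $\p(X)=\prod_{\g\in\p^*}x_\g$ of degree $|\p^*|$, and the difference of the two sides of equation \ref{eq:quadratic rule} is exactly its third finite difference evaluated at $\emptyset$; this vanishes identically once the degree is at most $2$ (equivalently, for $|\p^*|=1$ the co-event is additive, and for $|\p^*|=2$ it is the product of two coordinate functions, giving the identity after a two-line computation). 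Hence any $\p$ with $\P_S(\{\p\})>0$ obeys the quadratic rule for all $A,B,C$, which is the contrapositive of the claim. The delicate points are the sign bookkeeping in passing between the real sum rule and the $\Z2$-valued quadratic rule, and confirming that the disjoint-triple constraint genuinely captures the full arbitrary-triple rule through the $|\p^*|\le 2$ characterization.
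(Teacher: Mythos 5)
Your proposal is correct, and its engine coincides with the paper's own proof: you use finiteness of $S$ to rewrite the hypothesis as the expectation identity $\mu(A)=\sum_{\p\in S}\P_S(\{\p\})\tilde{\p}(A)$, push the level-$2$ sum rule (equation \ref{eq:quantum measure sum rule}) through it to obtain $0=\sum_{\p\in S}\P_S(\{\p\})R_{ABC}(\p)$ for disjoint triples (your $g_{A,B,C}$ is the paper's $R_{ABC}$), run the same case analysis on the dual $\p^*$ to show $R_{ABC}(\p)\in\{0,1\}$ for multiplicative $\p$ --- your characterization of when the value is $1$ matches lemma \ref{lemma:dynamical multiplicative coevents R is 0 or 1} --- and conclude from non-negativity of the weights that any $\p$ with $\P_S(\{\p\})>0$ has $R_{ABC}(\p)=0$ for every disjoint triple, whence the quadratic rule on disjoint triples follows mod $2$ as in lemma \ref{lemma:dynamical coevents R and Q}. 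Where you genuinely differ is the upgrade from disjoint to arbitrary triples. The paper does this with lemma \ref{lemma:quantum sum rule implies quadratic rule}, a combinatorial equivalence valid for \emph{any} co-event, proved by decomposing three arbitrary events into disjoint atoms and repeatedly applying the disjoint-triple rule. You instead exploit multiplicativity once more: vanishing of $R_{ABC}(\p)$ on all disjoint triples forces $|\p^*|\le 2$ (if $|\p^*|\ge 3$, partitioning $\p^*$ into three nonempty cells is legitimate since $\EA=P\O$, and yields $R=1$), and a multiplicative co-event with $|\p^*|\le 2$ is a monomial of degree at most $2$ on $\Z2^{|\O|}$, whose third finite difference --- precisely the deviation from equation \ref{eq:quadratic rule} --- vanishes identically; your two verifications (additivity for $|\p^*|=1$, direct expansion for $|\p^*|=2$) check out. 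Each route buys something. Yours is shorter and gives a sharper structural conclusion: the co-events able to carry positive probability are exactly the duals of singletons and pairs, which for multiplicative co-events is in fact equivalent to the quadratic rule (since $Q_{ABC}=0$ forces $R_{ABC}=0$ once $R_{ABC}\in\{0,1\}$). The paper's lemma buys generality: the disjoint-versus-arbitrary equivalence makes no use of multiplicativity, so it remains available for other co-event schemes, and it keeps the conclusion phrased intrinsically as the quadratic rule without detouring through a classification of duals.
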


As noted above, the use of a probability measure on the space of co-events echoes its use on histories in classical stochastic theories. Then following our usual interpretation that dynamically precluded `events' do not occur, we conclude that the actually real co-event must obey equation \ref{eq:quadratic rule}. Now equation \ref{eq:quadratic rule} is the defining property of the \emph{quadratic (co-event) scheme} \cite{CoEventSchemes} we alluded to in section \ref{sec:coevents for quantum theories}, a \emph{quadratic co-event} being in general a co-event obeying equation \ref{eq:quadratic rule} though not necessarily obeying linearity or indeed multiplicativity. Then since the quadratic scheme has been ruled out \cite{CoEventSchemes}, this puts into question our idea of `moving' the dynamics onto the co-events by constructing `co-event theories' in place of histories theories, unless we are willing to abandon the multiplicative scheme.

Of course this approach is not entirely ruled out, to arrive at equation \ref{eq:probs on coevents} we have made several simplifying assumptions so there may still be some scope for further investigation. For example, if we accept $\M$ as the space of potential realities of a histories theory $\H$, we might attempt to define all other elements of the theory in terms of the co-events $\p\in\M$. In particular, we could define events as maps from $\M$ to $\Z2$ using the `dual map' $A[\p]=\p(A)$. Since we may not be able to distinguish between all elements of $\EA$ in this manner, we might perhaps choose to take equivalence classes of `distinguishable events' ($A\sim B$ if $A[\p]=B[\p]$ for all $\p\in\M$), and to attempt to define a measure on this set as a stepping stone to a `full co-event theory'. Such a structure is yet to be explored. Alternatively, we might explore the restrictions on $\mu$ that would enable equation \ref{eq:probs on coevents}, or we might turn to another scheme.

\subsection{Proof of Theorem \ref{thm:dynamical coevents are quadratic}}

In this section we will prove theorem \ref{thm:dynamical coevents are quadratic}. We will need the following technical lemmas.

\begin{lemma}\label{lemma:quantum sum rule implies quadratic rule}
Let $\H$ be a histories theory with a finite sample space and let $\p$ be a (not necessarily multiplicative) co-event. Then
\bea
\p(A\sqcup B\sqcup C) &=& \p(A\sqcup B)+\p(B\sqcup C)+\p(C\sqcup A) \nonumber \\
&& +\p(A)+\p(B)+\p(C),\label{eq:proof sum rule equivalent to quadratic rule 1}
\eea $\forall~disjoint~A,B,C\in\EA$ if and only if \bea
\p(A'+ B'+ C') &=& \p(A'+ B')+\p(B'+ C')+\p(C'+ A') \nonumber \\
&& +\p(A')+\p(B')+\p(C'),\label{eq:proof sum rule equivalent to quadratic rule 2} \eea $\forall A',B',C'\in\EA$.
\end{lemma}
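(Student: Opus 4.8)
The plan is to prove the two implications separately; the reverse direction is immediate and the forward direction carries all the content. First I would note that equation \ref{eq:proof sum rule equivalent to quadratic rule 2} implies equation \ref{eq:proof sum rule equivalent to quadratic rule 1} for free: when $A,B,C\in\EA$ are pairwise disjoint, symmetric difference coincides with union, so $A+B=A\sqcup B$ and $A+B+C=A\sqcup B\sqcup C$, and specialising equation \ref{eq:proof sum rule equivalent to quadratic rule 2} to such a triple is exactly equation \ref{eq:proof sum rule equivalent to quadratic rule 1}. The work is therefore entirely in the forward direction. Here it helps to recast equation \ref{eq:proof sum rule equivalent to quadratic rule 1} as the vanishing, in $\Z2$, of the third Sorkin interference term built from $\p$, namely $I_3^\p(A,B,C):=\p(A\sqcup B\sqcup C)+\p(A\sqcup B)+\p(B\sqcup C)+\p(C\sqcup A)+\p(A)+\p(B)+\p(C)$; since $-1=+1$ in $\Z2$ this reorganisation is literally equation \ref{eq:proof sum rule equivalent to quadratic rule 1}.

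The key structural tool I would use is an atomic decomposition. Any three events $A',B',C'$ cut $A'\cup B'\cup C'$ into at most seven pairwise disjoint \emph{atoms} (the non-empty cells of the Venn diagram), and each of the seven sets appearing in equation \ref{eq:proof sum rule equivalent to quadratic rule 2}, namely $A'$, $B'$, $C'$, $A'+B'$, $B'+C'$, $C'+A'$ and $A'+B'+C'$, is a disjoint union of these atoms. The plan is then to evaluate $\p$ on each of the seven sets through a single expansion formula and compare the two sides of equation \ref{eq:proof sum rule equivalent to quadratic rule 2} term by term.

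To obtain that expansion formula I would first show that $I_3^\p=0$ forces $\p$, restricted to disjoint unions, to be determined by its values on single atoms and on pairs of atoms. Writing $p_x:=\p(x)$ and $q_{xy}:=\p(x\sqcup y)+\p(x)+\p(y)$ for atoms $x,y$, the claim is that $\p(S)=\sum_{x\in S}p_x+\sum_{\{x,y\}\subseteq S}q_{xy}$ in $\Z2$ for any disjoint union $S$ of atoms. This is the $\Z2$-analogue of the standard fact (noted after equation \ref{eq:quantum measure sum rule}) that a level-$2$ measure is fixed by its one- and two-point values; crucially the underlying argument is purely additive, being the linear recursion expressing $I_{n+1}^\p$ through $I_n^\p$, so it transfers verbatim to a $\Z2$-valued map. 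A short computation moreover shows that in this representation every $p_x$ and every $q_{xy}$ carries coefficient exactly $1$ (the $(n-2)$ factor of the real-valued formula is absorbed), so no integer coefficients survive to misbehave modulo $2$. I would prove the formula by induction on the number of atoms, applying equation \ref{eq:proof sum rule equivalent to quadratic rule 1} at each step and using $\p(\emptyset)=0$ for the base case.

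Finally I would substitute the expansion into both sides of equation \ref{eq:proof sum rule equivalent to quadratic rule 2} and count parities. The left side $\p(A'+B'+C')$ is supported on the atoms lying in an odd number of $A',B',C'$, and the bookkeeping should show that on the right side every atom and every pair meeting the ``even'' region cancels (appearing an even number of times among the six sets $A',B',C',A'+B',B'+C',C'+A'$), while exactly the atoms and pairs contained in the ``odd'' region survive with coefficient $1$, matching the left side. The main obstacle is not conceptual but the care demanded by this last step: one must track the seven $p_x$ and twenty-one $q_{xy}$ contributions through all six sets and verify the mod-$2$ cancellations. Securing the expansion formula cleanly, and confirming it genuinely holds over $\Z2$ (that the derivation never divides nor relies on the sign or magnitude of real coefficients), is the other point worth pinning down; once it is in hand, the equality of the two sides is a finite mechanical check.
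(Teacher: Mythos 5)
Your proposal is correct and takes essentially the same route as the paper: the reverse direction by specialisation to disjoint sets, and the forward direction by decomposing $A',B',C'$ into their Venn-diagram atoms and using iterated applications of equation \ref{eq:proof sum rule equivalent to quadratic rule 1} to expand $\p$ on disjoint unions into its values on atoms and pairs of atoms, then comparing the two sides mod $2$. The only difference is presentational: the paper applies the four-set pair expansion directly to each of the seven terms (with the singleton contributions cancelling mod $2$), whereas you first establish the general $n$-atom expansion formula by induction and then do the parity bookkeeping, which is a slightly more systematic rendering of the same argument.
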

\begin{proof}
First note that equation \ref{eq:proof sum rule equivalent to quadratic rule 1} is simply the restriction of equation \ref{eq:proof sum rule equivalent to quadratic rule 2} to disjoint sets, so that we immediately see that equation \ref{eq:proof sum rule equivalent to quadratic rule 1} $\Leftarrow$ equation \ref{eq:proof sum rule equivalent to quadratic rule 2}.

To prove the converse we consider sets $A_1,A_2,A_3\in\EA$ and break down their union $\bigcup_{i=1}^3A_i$ into disjoint components by defining: \bea
A_1A_3A_3 &=& A_1\cap A_2\cap A_3 \nonumber \\
\overline{A_iA_j}&=&(A_i\cap A_j)\setminus (A_1A_3A_3) \nonumber \\
\overline{A_i} &=& A_i\setminus (\overline{A_iA_j}\sqcup\overline{A_iA_k}\sqcup A_1A_3A_3). \nonumber \\
\eea Then \beq\nonumber \bigcup_iA_i = (\bigsqcup_i\overline{A_i})\sqcup(\bigsqcup_{i<j}\overline{A_iA_j})\sqcup A_1A_3A_3. \eeq
Assuming equation \ref{eq:proof sum rule equivalent to quadratic rule 1}, we can use this notation to decompose the terms in equation \ref{eq:proof sum rule equivalent to quadratic rule 2}. The left hand side becomes
\bea
\p(A_1+ A_2 + A_3) &=& \p((\bigsqcup_i\overline{A_i})\sqcup A_1A_2A_3) \nonumber \\
&=& \sum_{i<j}\p(\overline{A_i}\sqcup\overline{A_j}) + \sum_i \p(\overline{A_i}\sqcup A_1A_2A_3). \nonumber
\eea
Similarly turning our attention to the right hand side, for $i,j,k$ distinct we find that
\bea
\p(A_i+ A_j) &=& \p(\overline{A_i}\sqcup\overline{A_iA_k}\sqcup\overline{A_j}\sqcup\overline{A_jA_k}) \nonumber \\
&=& \p(\overline{A_i}\sqcup \overline{A_j})+\p(\overline{A_i}\sqcup \overline{A_iA_k})+ \p(\overline{A_i}\sqcup \overline{A_jA_k}) \nonumber \\
&& + \p(\overline{A_j}\sqcup \overline{A_iA_k})+ \p(\overline{A_j}\sqcup \overline{A_jA_k}) + \p(\overline{A_iA_k}\sqcup \overline{A_jA_k}), \nonumber
\eea
and
\bea
\p(A_i) &=& \p(\overline{A_i}\sqcup\overline{A_iA_j}\sqcup\overline{A_iA_k}\sqcup A_1A_3A_3) \nonumber \\
&=& \p(\overline{A_i}\sqcup\overline{A_iA_j})+\p(\overline{A_i}\sqcup\overline{A_iA_k})+\p(\overline{A_i}\sqcup A_1A_3A_3) \nonumber \\
&& + \p(\overline{A_iA_j}\sqcup\overline{A_iA_k})+ \p(\overline{A_iA_j}\sqcup A_1A_3A_3)+ \p(\overline{A_iA_k}\sqcup A_1A_3A_3). \nonumber
\eea
Comparing these it is easy to see that that the left and right hand sides of equation \ref{eq:proof sum rule equivalent to quadratic rule 2} are equal, hence the result.
\end{proof}

We now introduce some notation. First, given a histories theory $\H$ and a co-event $\p\in\CE$ we define
\bea
Q_{ABC}(\p) &=& \p(A\sqcup B\sqcup C) + \p(A\sqcup B)+\p(B\sqcup C)+\p(C\sqcup A) \nonumber \\
&& +\p(A)+\p(B)+\p(C),
\eea
where $A,B,C\in\EA$ are disjoint events. Then lemma \ref{lemma:quantum sum rule implies quadratic rule} tells us that $\p\in\CE$ is quadratic if and only if $Q_{ABC}(\p)=0$ for all disjoint $A,B,C\in\EA$.

Now though co-events take values in $\Z2$, probability measures take values in $\R$. It will be useful to algebraically combine the images of (the values returned by) co-events and probability measures, which we can achieve by thinking of $\Z2$ as a subset of $\R$; thus given a co-event $\p\in\CE$ we can construct the related map:
\bea
\tilde{\p}:\EA&\rightarrow&\R \nonumber \\
\tilde{\p}(A) &=& \left\{\ba{cc} 0 & \p(A)=0 \\ 1 & \p(A) = 1. \ea\right. \nonumber
\eea
Using this map we can define a real valued analogue of the function $Q_{ABC}$:
\bea
R_{ABC}:\CE &\rightarrow& \R \nonumber \\
R_{ABC}(\p) &=& \tilde{\p}(A\sqcup B\sqcup C)- \tilde{\p}(A\sqcup B)-\tilde{\p}(B\sqcup C)-\tilde{\p}(C\sqcup A)  \nonumber \\
&& +\tilde{\p}(A)+\tilde{\p}(B)+\tilde{\p}(C)) \nonumber,
\eea
where $A,B,C\in\EA$ are disjoint events. We can show that if $\p$ obeys the multiplicative rule then $R_{ABC}$ is either $0$ or $1$, regardless of our choice of $A,B,C$.
\begin{lemma}\label{lemma:dynamical multiplicative coevents R is 0 or 1}
Let $\H$ be a histories theory with a finite sample space, and let $\p\in\CE$ be multiplicative. Then for all disjoint $A_1,A_2,A_3\in\EA$ we have:
$$R_{A_1A_2A_3}(\p)\in\{0,1\}.$$
\end{lemma}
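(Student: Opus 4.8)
The plan is to exploit the principal filter description of multiplicative co-events established just above. Since $\p$ is multiplicative, $\p^{-1}(1)$ is a principal filter with principal element $\p^*$, and because $\p(\emptyset)=0$ the set $\p^*$ is nonempty. The crucial consequence is that for every $X\in\EA$ we have $\tilde\p(X)=1$ precisely when $\p^*\subseteq X$; thus each of the seven terms making up $R_{A_1A_2A_3}(\p)$ is merely a containment test for a single fixed nonempty set $P:=\p^*$.

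First I would reduce the statement to a bounded combinatorial check by recording how $P$ meets the four pairwise disjoint regions $A_1$, $A_2$, $A_3$ and $\O\setminus(A_1\cup A_2\cup A_3)$. Writing $a_i=1$ if $P\cap A_i\neq\emptyset$ and $d=1$ if $P$ meets the complementary region, each $\tilde\p$ value becomes an explicit function of $(a_1,a_2,a_3,d)$: for instance $\tilde\p(A_1\sqcup A_2\sqcup A_3)=1$ iff $d=0$; $\tilde\p(A_i\sqcup A_j)=1$ iff $d=0$ and $a_k=0$ (with $k$ the remaining index); and $\tilde\p(A_i)=1$ iff $d=0$, $a_i=1$ and $a_j=a_k=0$. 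The key structural input is that, since $P$ is nonempty and the $A_i$ are disjoint, $P$ can be contained in at most one $A_i$, so at most one of the three singleton terms $\tilde\p(A_i)$ can equal $1$ — exactly the feature that distinguishes this from the classical homomorphism case, where $P$ is effectively a single point lying in at most one region.

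Then I would carry out the case split. If $d=1$ (that is, $P\not\subseteq A_1\cup A_2\cup A_3$) every one of the seven terms vanishes and $R_{A_1A_2A_3}(\p)=0$. If $d=0$, so $P\subseteq A_1\cup A_2\cup A_3$, then at least one $a_i$ equals $1$; splitting on whether exactly one, exactly two, or all three of the $a_i$ are $1$, a short substitution into the signed sum shows it collapses to $0$ in the first two cases and to $1$ when all three $a_i=1$. Assembling the cases yields $R_{A_1A_2A_3}(\p)\in\{0,1\}$.

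I do not anticipate a genuine obstacle here: once the filter description is in hand the proof is a finite enumeration. The only point requiring care is the bookkeeping of signs, since a priori the seven $\{0,1\}$-valued terms could sum to any integer between $-3$ and $4$. It is precisely the monotone, intersection-respecting nature of the containment test $\tilde\p(X)=[\,P\subseteq X\,]$ that forces the negative (pairwise) and positive (triple and singleton) contributions to cancel down into $\{0,1\}$, with the value $1$ arising exactly in the single ``genuinely non-classical'' configuration where $P$ straddles all three of $A_1,A_2,A_3$.
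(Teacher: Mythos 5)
Your proof is correct and takes essentially the same route as the paper: both arguments rest on the principal-filter fact that $\p(X)=1$ if and only if $\p^*\subseteq X$, followed by a finite case analysis of how the nonempty set $\p^*$ sits relative to $A_1,A_2,A_3$. Your four cases in the variables $(a_1,a_2,a_3,d)$ (namely $d=1$; $d=0$ with exactly one, exactly two, or all three $a_i=1$) are precisely the paper's four containment cases ($\p^*$ not in the triple union; contained in some single $A_i$; in some pairwise union but no singleton; in the triple union but no pairwise union), just written in indicator-variable bookkeeping rather than as containment statements.
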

\begin{proof}
Since $\p$ is multiplicative, for any $A\in\EA$ we know that $\p(A)=1$ if and only if $\p^*\subset A$. Then fixing our three events $A_1,A_2,A_3$ we have four cases:
\begin{enumerate}
  \item $\p^* \not\subset A_1\sqcup A_2\sqcup A_3$; then $\p( A_1\sqcup A_2\sqcup A_3)=\p( A_i\sqcup A_j)=\p(A_k)=0$ for all $i,j,k$, so that $R_{A_1A_2A_3}(\p)=0$.

  \item $\p^* \subset A_1\sqcup A_2\sqcup A_3$, $\p^*\not\subset A_i\sqcup A_j$ for any $i,j$; then $\p( A_1\sqcup A_2\sqcup A_3)=1$ but $\p( A_i\sqcup A_j)=\p(A_k)=0$ for all $i,j,k$, so that $R_{A_1A_2A_3}(\p)=1$.

  \item $\p^* \subset A_i\sqcup A_j$ for some $i,j$, but $\p^*\not\subset A_k$ for any $k$. Without loss of generality we can assume that $i,j=1,2$; then $\p( A_1\sqcup A_2\sqcup A_3)=1$, $\p(A_1\sqcup A_2) = 1$, and $\p(A_k)=0$ for all $k$. Further, $\p^*\not\in A_2\sqcup A_3$ otherwise $\p^*\in A_2$, similarly $\p^*\not\in A_3\sqcup A_1$, so $\p(A_2\sqcup A_3) = \p(A_3\sqcup A_1) = 0$. Therefore $R_{A_1A_2A_3}(\p)=1-1=0$.

  \item $\p^*\subset A_i$ for some $i$; without loss of generality assume that $i=1$. Then $\p^*\not\subset A_2,A_3$, since these sets are disjoint to $A_1$. Then $\p(A_k)=\delta_{1k}$, $\p(A_1\sqcup A_2)=\p(A_1\sqcup A_3)=1$, $\p(A_2\sqcup A_3)=0$, and finally $\p(A_1\sqcup A_2 \sqcup A_3)=1$. Thus $R_{A_1A_2A_3}(\p)=1-1-1+1=0$.
\end{enumerate}
Hence the result.
\end{proof}

Finally, we can relate $R_{ABC}$ and $Q_{ABC}$.

\begin{lemma}\label{lemma:dynamical coevents R and Q}
Let $\H$ be a histories theory with a finite sample space and let $\p\in\CE$ be a multiplicative co-event. Then if $A,B,C\in\EA$ are disjoint
$$R_{ABC}(\p)=0 \Rightarrow Q_{ABC}(\p)=0.$$
\end{lemma}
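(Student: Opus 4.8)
The plan is to exploit the fact that $Q_{ABC}(\p)$ and $R_{ABC}(\p)$ are assembled from exactly the same seven numbers: the values of $\p$ (equivalently $\tilde\p$) on the sets $A$, $B$, $C$, $A\sqcup B$, $B\sqcup C$, $C\sqcup A$ and $A\sqcup B\sqcup C$. The two quantities differ only in two respects. First, $Q_{ABC}(\p)$ is computed in $\Z2$ whereas $R_{ABC}(\p)$ is computed in $\R$. Second, $R_{ABC}$ carries the mixed sign pattern in which the three two-element unions are subtracted, while in $Q_{ABC}$ every term enters with a plus sign. Since addition and subtraction coincide modulo $2$, this sign pattern is invisible once one works in $\Z2$.

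First I would record that, because $\tilde\p$ takes values in $\{0,1\}\subset\R$, the number $R_{ABC}(\p)$ is an integer. The central step is then to reduce this integer modulo $2$. Because $-1\equiv 1\pmod 2$, each subtracted term contributes the same residue it would contribute with a plus sign, so the seven-term signed real sum defining $R_{ABC}(\p)$ reduces, modulo $2$, to the all-plus $\Z2$-sum of the same seven values of $\p$, which is by definition $Q_{ABC}(\p)$. Hence
$$R_{ABC}(\p)\equiv Q_{ABC}(\p)\pmod 2.$$

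With this congruence the result is immediate: if $R_{ABC}(\p)=0$ then its residue modulo $2$ vanishes, and therefore $Q_{ABC}(\p)=0$ in $\Z2$. Notice that this implication uses neither the multiplicativity of $\p$ nor the preceding lemma bounding $R_{ABC}(\p)$ to $\{0,1\}$; those ingredients serve instead to guarantee that the hypothesis $R_{ABC}(\p)=0$ can actually be secured for the $\p$ of interest (via the quantum sum rule applied to the $\P_S$-expectation of $R_{ABC}$), but the implication itself is purely the parity identity above. The only point that demands any care is the routine bookkeeping of checking that the seven terms of $Q_{ABC}$ and $R_{ABC}$ genuinely match set-for-set, so I regard this as the mildest of the three lemmas feeding Theorem \ref{thm:dynamical coevents are quadratic}.
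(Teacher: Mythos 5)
Your proposal is correct and follows essentially the same route as the paper's own proof: the paper likewise reduces the integer $R_{ABC}(\p)$ modulo $2$, notes that $+$ and $-$ coincide in $\Z2$ so the sign pattern disappears, and identifies the residue with $Q_{ABC}(\p)$ term by term. Your side observation that multiplicativity is not actually needed for this implication also holds for the paper's argument, which uses only the parity identity.
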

\begin{proof}
First, denote by $X~Mod~2$ the $\Z2$ element corresponding to the integer $X~mod~2$, where $X$ is an integer in $\R$. Then $\p(A)=\tilde{\p}(A)~Mod~2$ for any $A\in\EA$, so noting that $+$ and $-$ are equivalent in $\Z2$, for any disjoint $A,B,C\in\EA$ we have:
\bea
R_{ABC}(\p)~Mod~2 &=& \tilde{\p}(A\sqcup B\sqcup C)~Mod~2 \nonumber \\
&& - \tilde{\p}(A\sqcup B)~Mod~2 -\tilde{\p}(B\sqcup C)~Mod~2-\tilde{\p}(C\sqcup A)~Mod~2  \nonumber \\
&& +\tilde{\p}(A)~Mod~2+\tilde{\p}(B)~Mod~2+\tilde{\p}(C)~Mod~2, \nonumber \\
&=& \p(A\sqcup B\sqcup C) + \p(A\sqcup B)+\p(B\sqcup C)+\p(C\sqcup A) \nonumber \\
&& +\p(A)+\p(B)+\p(C), \nonumber \\
&=& Q_{ABC}(\p). \nonumber
\eea
Hence the result.
\end{proof}

We are now in a position to prove the theorem.

\begin{proof}{\textit{of theorem \ref{thm:dynamical coevents are quadratic}}}
\\By assumption our probability measure $\P_S$ obeys equation \ref{eq:probs on coevents}, then the sum rule obeyed by the quantum measure (equation \ref{eq:quantum measure sum rule}) gives us:
\bea
\P_S(\{\p|\p(A\sqcup B\sqcup C)=1\}) &=& \P_S(\{\p|\p(A\sqcup B)=1\})+\P_S(\{\p|\p(B\sqcup C)=1\}) \nonumber \\
&& +\P_S(\{\p|\p(C\sqcup A)=1\})-\P_S(\{\p|\p(A)=1\})\nonumber \\
&&-\P_S(\{\p|\p(B)=1\})-\P_S(\{\p|\p(C)=1\}), \label{eq:coevent prob sum rule}
\eea
for all disjoint $A,B,C\in\EA$. Now because $\P_S$ is a probability measure we can use the Kolmogorov sum rule to decompose its valuation on any set into a sum of its valuations on the elements of that set. Thus if we define $p_\p=\P_S(\{\p\})$ we get:
\bea
\P_S(\{\p|\p(A)=1\}) &=& \sum_{\p(A)=1} p_\p \nonumber \\
&=& \sum_{\p\in S} p_\p \tilde{\p}(A), \label{eq:coevent prob decomposition}
\eea
for all $A\in\EA$. Putting this back into equation \ref{eq:coevent prob sum rule} we get:
\bea
\sum_{\p\in S} p_\p \tilde{\p}(A\sqcup B\sqcup C) &=& \sum_{\p\in S} p_\p \tilde{\p}(A\sqcup B)+\sum_{\p\in S} p_\p \tilde{\p}(B\sqcup C)+\sum_{\p\in S} p_\p \tilde{\p}(C\sqcup A) \nonumber \\
&& -\sum_{\p\in S} p_\p \tilde{\p}(A)-\sum_{\p\in S} p_\p \tilde{\p}(B)-\sum_{\p\in S} p_\p \tilde{\p}(C),
\eea
for all disjoint $A,B,C\in\EA$, which we can rewrite as:
\beq\label{eq:decomposed coevent prob sum rule}
0 = \sum_{\p\in S}p_\p R_{ABC}(\p)~~~~\forall~\text{disjoint}~A,B,C\in\EA.
\eeq
Now fix $A,B,C$ and let $Z_{ABC}\subset S$ be the set of co-events $\p$ such that $R_{ABC}=0$. Further denote by $\overline{Z}_{ABC}$ the set of co-events such that $R_{ABC}\neq 0$, so that $S=Z_{ABC}\sqcup\overline{Z}_{ABC}$. Now by assumption every $\p\in S$ is multiplicative, so by lemma \ref{lemma:dynamical multiplicative coevents R is 0 or 1} $\p\in\overline{Z}_{ABC}\Rightarrow R_{ABC}(\p)=1$. Then equation \ref{eq:decomposed coevent prob sum rule} gives us:
\bea
0 &=& \sum_{\p\in Z_{ABC}}p_\p R_{ABC}(\p) + \sum_{\p\in \overline{Z}_{ABC}}p_\p R_{ABC}(\p), \nonumber \\
&=& \sum_{\p\in \overline{Z}_{ABC}}p_\p.
\eea
Noting that $p_\p\geq 0$, this means $p_\p=0$ for all $\p\in\overline{Z}_{ABC}$. Since $A,B,C$ are arbitrary disjoint sets, we conclude that $p_\p=0$ unless $R_{ABC}(\p)=0$ for all disjoint $A,B,C$. But by lemma \ref{lemma:dynamical coevents R and Q} $R_{ABC}(\p)=0$ implies that $Q_{ABC}(\p)=0$, which by lemma \ref{lemma:quantum sum rule implies quadratic rule} implies that $\p$ is quadratic. This completes our proof.
\end{proof}

\section{Approximate Preclusion}\label{sec:approximate preclsuion}

Rather than tackling the whole of the dynamics, an alternate approach is to focus directly on the predictive content of a histories theory. Thus instead of, or as a step towards, expressing $\mu$ in terms of the theory's allowed co-events, we would primarily be concerned with the probability statements that are the (experimentally falsifiable) predictions of our theory. As before, we seek to be able to `start' with a set of potentially real co-events, and to make our experimental predictions in terms of these co-events and structures defined in terms of them.

Our strategy will be to examine more closely the meaning (or interpretation) of probability \emph{as applied to (experimentally) falsifiable predictions}; thus we look to understand a dynamical statement $\mu(A)=q$ or $\P(B)=p$ through its (experimentally) falsifiable implications. We will begin by focusing on classical theories, where we can draw from a long history of thought upon the matter to pick a suitable interpretation of probability; we then seek to generalise this interpretation in terms of the potentially real co-events in such a manner that can be extended to quantum histories theories.

This strategy requires a somewhat rigorous approach to the interpretation of probability, for we are not simply attempting to justify pre-existing statistical practice, but to construct a predictive framework for a new ontology. In a world of co-events, what do we mean by $\mu(A)=q$, or even $\P(A)=p$? Since co-events have been connected to the measure through the preclusion of null sets, a natural approach is to seek an interpretation of probability based on this notion. We begin with a simple example that will guide us through this process.

\subsection{The Classical Coin}\label{sec:classical coin}

Consider a coin described by classical stochastic dynamics. We assume that if we throw the coin we will either have a `heads' or a `tails' result, we will further assume that all such throws are in some sense `equivalent' so that the probability of `heads' is always $p$ and the probability of `tails' always $1-p$. But what exactly do we mean by this?

One approach to its interpretation is to emphasise the role of probability in representing our knowledge or expectations about a system, so in this case we would for example say that $p=1/2$ if we had no information or view on the outcome of a coin toss \cite{Ramsey:1926,Finetti:1930,Cox:2001}. An alternate approach focuses on the practical testing of such assertions, which takes place through a set of repeated, independent trials \cite{Friedman:1999,Kolmogorov:1933}. In the latter framework a probability assertion $\P(heads)=1/2$ is a statement, not about a single trial, but about a theoretical ensemble of trials, or a prediction regarding a hypothesised set of future trials. Although the former approach is valid and useful, for example in the theory of decision making \cite{Anand:1993}, in light of our difficulties in constructing a probability measure on the space of co-events (section \ref{sec:dynamics on coevents}) we will focus on the latter approach, which is more in accordance with the predictive and experimental nature of our field, and as we shall see lends itself naturally to the concept of preclusion. We begin by discussing how we might test an assertion $\P(heads)=p$ in practice.

We test a theory, or a statement within a theory, by testing its experimentally falsifiable predictions. But when we believe there is one realised outcome (the actually real co-event, or the real history if we using the naive interpretation) we face the problem that unless $p\in\{0,1\}$, the statement $\P(heads)=p$ can not be falsified by either outcome `heads' or `tails'. We can however make falsifiable statements about sequences of trials.

The frequentist approach, as articulated by Friedman \cite{Friedman:1999}, interprets a probability statement $\P(heads)=p$ by identifying it with the asymptotic relative frequency of the outcome `heads' in an infinite sequence of repeated trials; under the requirement that the sequence of trials conforms to a `randomness' criterion \cite{Friedman:1999,VonMises:1957}. Indeed, if we had a theoretical ensemble of infinitely many trials, the proportion of the trials resulting in heads would be exactly\footnote{More precisely it would be $p$ with probability $1$.} $p$; alternatively the assertion that `the proportion of heads is not $p$' is precluded by the measure on the ensemble, and herein lies our link with co-events.

However asymptotic properties of a sequence cannot be determined using a finite number of elements of the sequence, and an infinite ensemble of trials is not realisable in terms of an experiment, whereas we seek to phrase probability in terms of experimentally falsifiable predictions. Further, the co-event is meant to be `real', so it is not meaningful to say it precludes an imagined event in a theoretical ensemble that is not part of the `actual' event algebra on which the co-event is defined.

We are thus pushed into defining probabilities based upon finite trials, and will focus on one particular technique of testing probability assertions through a sequence of repeated trials; the standard statistical technique of hypothesis testing \cite{Tucker:1962}. We present a simple version of hypothesis testing that will suffice for the purposes of this paper.

We begin with our assertion, or \emph{hypothesis}, $\P(heads)=p$ and repeat our trial $n$ times, resulting in a history $\g$ consisting of $n$ ordered `heads' or `tails' outcomes (we will ignore the issue of the `randomness' of this sequence for now). We will denote by $H$ (or $H(\g)$) the number of these trials that result in a `heads' outcome, so that the proportion of heads is $H/n$; to compare this `statistic' \cite{Tucker:1962} with the implications of our hypothesis we assume the product measure (which we shall also denote by $\P$) on the sequence of trials. Using $N_H$ to denote the event that the number of heads in the realised history is $H$, we define the \emph{cumulative probability} of $N_H$ to be:
\beq\label{eq:cumulative probability}
\CP(N_H) = \sum_{m=1}^{H}\P(N_m).
\eeq
Thus $\CP(N_k)$ is the probability that the number of heads in the realised history is less than or equal to $k$. For the purposes of this paper we will say that we \emph{reject the hypothesis at the $\e$ level} if $\CP(N_{H(\g)}) < \e$, where $0<\e\leq 1$ (typically $\e<<1$). If $\g$ is such that $\CP(N_{H(\g)}) \geq \e$ we \emph{fail to reject the hypothesis at the $\e$ level}\footnote{This is essentially a `one-tailed test'; while both one- and two-tailed tests would be equally valid in this context (though they would correspond to different values of $\e$), we will find the one-tailed test simpler to analyse.}. Note crucially that we have implicitly \textit{chosen} to organise the potential outcomes according to the number (or equivalently the proportion) of heads outcomes, based on our hypothesis; the importance of this choice will become clear below.

This technique (or a more sophisticated version thereof \cite{Tucker:1962}) can be used in the testing of scientific theories; $\e$ is then chosen to represent the `degree of certainty' we wish to test our theory to. However, assuming $p\not\in\{0,1\}$ no history (no sequence of outcomes) is actually precluded by our hypothesis; strictly speaking the hypothesis has no falsifiable implications for the realised outcomes, thus no realised history can falsify the hypothesis.

To justify our hypothesis testing technique, and to place it on a more rigorous footing, we turn to \emph{Cournot's Principle} \cite{Cournot:1843,Shafer:2005}; which has long been used to connect ``the mathematical formalism of probability to the empirical world'' \cite{Galvan:2008}.

\subsection{Cournot's Principle}\label{sec:cournot's principle}

Crudely speaking, in one way or another Cournot's Principle `rules out' events of small probability. This concept was certainly known to Bernoulli, who asserted that high probability can be treated as a `moral certainty' \cite{Bernoulli:1713}. Cournot himself made the connection with physics, arguing that events of small probability may be mathematically possible but `physically impossible' \cite{Cournot:1843}. Among many other references, the principle is used or alluded to by Levy \cite{Levy:1925,Levy:1937}, Markov \cite{Shafer:2005}, Borel \cite{Borel:1909} \& Kolmogorov \cite{Kolmogorov:1933,KolmogorovEnglish:1933}, under the name `Principle B'. More recently it has been applied by Goldstein et al to statistical mechanics \cite{Goldstein:2001} and Bohmian mechanics \cite{Durr:1992}, and by Galvan to quantum mechanics \cite{Galvan:2008}. For a more detailed discussion of the history of Cournot's principle see \cite{Shafer:2005}.

However it was the French mathematician Maurice Frechet who coined the term ``principe de Cournot'', which has come into English as `Cournot's Principle', or the `Cournot Principle' \cite{Shafer:2005}. Frechet distinguished between Strong and Weak forms of Cournot's Principle \cite{Frechet:1949}; Shafer describes these two formulations as follows \cite{Shafer:2005}:
\begin{quote}
The strong form refers to an event of small or zero probability that we single out in advance of a single trial: it says the event will not happen on that trial. The weak form says that an event with very small probability will happen very rarely in repeated trials. Some authors, including Levy, Borel, and Kolmogorov, adopted the strong principle. Others, including Chuprov and Frechet himself, preferred the weak principle.
\end{quote}
Kolmogorov's statement of Strong Cournot (his `Principle B) as concerns the probability $\P(E)$ of an event $E$ occurring in an experiment $C$ is \cite{Kolmogorov:1933,KolmogorovEnglish:1933} (as translated by Sahfer \& Vovk \cite{Shafer:2005vv}):
\begin{quote}
If $\P(E)$ is very small, one can be practically certain that when $C$ is carried out only once, the event $E$ will not occur at all.
\end{quote}
These statements contain some ambiguities, for example, in Kolmogorov's `Principle B', what exactly do we mean by `practically certain'? To apply Cournot's Principle to co-events, we will have to be more rigorous. We begin by giving a more precise specification of the Strong Cournot Principle:
\begin{description}
  \item[Literal Strong Cournot] Events of probability less than $\e<<1$ do not occur, for some $\e> 0$.
\end{description}
Where there is no danger of confusion with other variations of the Strong Cournot Principle, we will refer to Literal Strong Cournot simply as the `Strong Cournot Principle', or `Strong Cournot'. This literal specification of the Strong Cournot Principle is in some ways a `maximal' interpretation of the Cournot Principle, and as such may appear to be `neat' or `simple'; it would also allow us to deal with probability `objectively', and in the absence of observation, by grounding it in the ontology. However it is problematic in the case of classical stochastic theories. Firstly what do we mean by a small probability, are we to take $\e$ as a constant of nature? Secondly there is nothing to prevent the probability of every single history in the repeated trial being much smaller than the probability of the events we wish to preclude, in which case a literal application of Strong Cournot would rule out all single histories, and thus any reality. In fact an example of such a situation is provided by our coin, given a sufficiently large number of trials.

Since Literal Strong Cournot is not appropriate for classical theories, we instead present a `minimal' version of Weak Cournot that allows us to perform the hypothesis testing described above, but little else.
\begin{description}
  \item[Operational Weak Cournot] In a repeated trial, an event of probability less than $\e<<1$, identified ahead of the repeated trial, will not occur, for some $\e> 0$. Only a single event can be considered for a given repeated trial.
\end{description}
Thus we can only make one prediction for each repeated trial. Where there is no danger of confusion with other variations of the Weak Cournot Principle, we will refer to Operational Weak Cournot simply as the `Weak Cournot Principle', or `Weak Cournot'. Notice that our reference to repeated trials is essentially semantic; a single trial can be considered as a special, `$n=1$', case of a repeated trial, and a general `$n$ times' repeated trial can be considered as a single trial in which sequences of length $n$ are thought of as single outcomes. We insist on the term `repeated trial' to emphasise the link between Operational Weak Cournot and hypothesis testing\footnote{Indeed, we regard Operational Weak Cournot as a variation on Shafer's `weak principle' precisely because of the shared emphasis on repeated trials. We note however, that in its insistence that certain `low probability' events \emph{will not} occur (rather than \emph{occurring rarely}, Operational Weak Cournot bears some resemblance to Shafer's strong principle'.}.

Operational Weak Cournot has the advantage that it fits in well with our actual methods of falsifying theories, such as the statistical hypothesis testing discussed in section \ref{sec:classical coin}. When we make the hypothesis $\P(heads)=p$, there may be certain outcomes that would convince us that this hypothesis has been falsified; for example we may reconsider the assertion $\P(heads)=1/2$ were we to toss our coin a million times only to find a heads outcome for every toss. We can turn this around by saying that \textit{those outcomes that would falsify a theory are precluded by it}. Of course there remains some ambiguity in the choice of $\e$, however this ambiguity is inherent and has not been introduced by our adoption of Weak Cournot; whatever interpretation of probability we adopt we would have to choose the $\e$ we use to falsify our theories (of course this $\e$ need not be unique). In a sense this identification takes the arbitrariness of our choice of $\e$ out of our theory and into the `meta' level on which we compare and reject theories. The `meta' level is always present and by using it to give us $\e$ we have avoided the addition of `new' ambiguity.

Applying this to our coin (section \ref{sec:classical coin}), we start with our hypothesis $\P(heads)=p$, then in the context of a `possible' experiment (or $n$-fold repeated trial) we \emph{choose} to consider the proportion of heads that would be realised in such as experiment, and thus single out the event\footnote{This is an event in the event algebra related to the sample space of all possible sequences of outcomes in our repeated trial.} $L$ that the realised history $\g$ will obey $\CP(H(\g))<\e$. Since the probability of this event is less than $\e$, using Operational Weak Cournot we are justified in asserting that it will not occur (we are assuming the hypothesis). The non-occurrence of $L$ then becomes a falsifiable prediction of our hypothesis; and once we perform the experiment, the occurrence of the event $L$ would falsify the hypothesis.

However, given the realised history $\g$ (once the experiment has been carried out), if $n$ is sufficiently large we will always be able to find an event that has occurred (ie it contains the realised history) yet is assigned probability $<\e$ by the hypothesis; for example, as mentioned above the event that the realised history $\g$ will occur will itself be of probability $<\e$ for sufficiently large $n$. Because of this, to rule out the problems faced by Literal Strong Cournot we have phrased Operational Weak Cournot so as to allow conclusions to be drawn concerning only the single\footnote{Each such prediction is an element of the event algebra related to the sample space of all possible sequences of outcomes in our repeated trial. thus if we wish to make multiple predictions in the context of a single experiment, we can combine them using the logical (Boolean) operations in the event algebra to yield a single event.} falsifiable prediction (implied by our hypothesis) singled out before the experiment and being tested by it.

In this way Operational Weak Cournot provides us with a practical method of relating probability statements to experimental measurements, though we have had to adopt a `minimalist' and perhaps `restrictive' interpretation of probability, which might for example restrict our ability to treat probability statements as logical predicates \cite{Cox:2001}. However in the classical theory this has no impact on either our ontology or the `meaning' we attribute to dynamics, for we typically assume a deeper deterministic theory operating at a more fundamental level. This will not hold in the quantum case, which therefore will require us to take our chosen interpretation of probability more seriously.

\subsection{Approximate Co-Events}\label{sec:approximate co-events}

In one way or another Cournot's principle rules out events of small probability. Because this is a straightforward generalisation of the concept of preclusion, which rules out null sets, we can easily express it in terms of co-events by introducing the concept of \textit{approximate preclusion}. For the sake of clarity we will henceforth refer to preclusion itself as \textit{exact preclusion}, thus an exactly preclusive co-event (which we will sometimes shorten to an \emph{exact co-event}) obeys
$$\mu(A) = 0 \Rightarrow \p(A)=0$$
whilst an \textit{approximately preclusive co-event} is given by:
\begin{definition}\label{def:approximate preclusion}
Let $\H$ be a histories theory. Given $\e>0$ a co-event $\pe$ is \textbf{approximately preclusive} at the $\e$ level if
$$\mu(A)<\e \Rightarrow \pe(A)=0.$$
We say that $\pe$ is an \textbf{approximately preclusive co-event}, or simply an \textbf{approximate co-event}.
\end{definition}
Notice that this definition holds for a general histories theory, not simply in the classical case, and thus might be used to generalise Cournot's Principle.

Though the introduction of approximate co-events is a fundamental change in the the theory, its implications for the `internal structure' of the `allowed' co-events is less than radical. To be precise, we have effectively altered the precluded events, including `almost null' as well as null sets. We will refer to events of measure less then $\e$ as `approximately precluded' or `$\e$-null' sets, with `$\e$-negligible' sets defined similarly. However other than this we have made no changes, and have not altered the algebraic structure of the co-events, and so can define many of the same concepts and prove many of the same theorems that we could for exactly preclusive co-events. In particular we can define multiplicativity, domination and primitivity based upon approximate preclusion in exactly the same way we defined these concepts for exact preclusion, leading to a notion of `approximate co-event schemes' and in particular an approximate multiplicative scheme, which we shall henceforth assume. Further, we can prove `approximate co-event versions' of many of the results we have established for the exact multiplicative scheme, for example the following lemma, which shall make use of below, is analogous to lemma \ref{lemma:mult existence primitivity}:

\begin{lemma}\label{lemma:approximate mult existence primitivity}
Let $\H$ be a history theory with a finite sample space and let $A\in\EA$ be non-negligible (ie $A$ is not a subset of a null set). Then there exists a primitive approximately preclusive multiplicative co-event $\pe$ such that $\pe(A)=1$.
\end{lemma}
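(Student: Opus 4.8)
The plan is to adapt the proof of Lemma \ref{lemma:mult existence primitivity} essentially verbatim, replacing exact preclusion by approximate preclusion at the $\e$ level throughout. First I would observe that since $A$ is non-negligible, it is in particular not a subset of any $\e$-null set (a set of measure less than $\e$): if $A\subseteq Z$ with $\mu(Z)<\e$, then $A$ would be negligible in the ordinary sense, contrary to hypothesis. Hence the dual co-event $A^*$ is \emph{approximately} preclusive at the $\e$ level, since $A^*(B)=1$ forces $A\subseteq B$, whence $\mu(B)\geq\mu(A)\geq\e$ (here I use that $A$ non-negligible gives $\mu(A)\geq\e$, as $A$ itself is not $\e$-null). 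In particular $A^*(A)=1$, so $A^*$ is an approximately preclusive multiplicative co-event valuing $A$ to $1$.

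Next I would run the same contradiction argument as in Lemma \ref{lemma:mult existence primitivity}, now using the approximate versions of domination and primitivity. Suppose for contradiction that there is no primitive approximately preclusive multiplicative co-event $\pe$ with $\pe(A)=1$. Starting from $\psi_0=A^*$, which is approximately preclusive and satisfies $\psi_0(A)=1$ but by assumption is not primitive, I obtain an approximately preclusive $\psi_1$ dominating it with $\psi_1(A)=1$; iterating produces an infinite strictly increasing chain $\{\psi_i\}_{i=0}^\infty$ of approximately preclusive multiplicative co-events, each valuing $A$ to $1$ and each strictly dominating its predecessor. Since domination corresponds to strict inclusion of the associated principal filters, this chain is strictly increasing in a finite poset, contradicting the finiteness of $\O$ (hence of $\EA$ and $\CE$). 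This yields the desired $\pe$.

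The only genuinely new point — and the one I would flag as the crux — is verifying that the notions used in the chain argument (multiplicativity, the principal-filter correspondence, domination, and primitivity) all transfer unchanged to the approximate setting. This is exactly the content of the remark preceding the lemma: approximate preclusion alters only which events are precluded (replacing null by $\e$-null sets) and leaves the algebraic structure of co-events untouched. Thus the filter lemma and the domination/primitivity definitions apply word for word, and the finiteness argument goes through identically. I expect no real obstacle here beyond bookkeeping; the proof is structurally a transcription of Lemma \ref{lemma:mult existence primitivity}, with the single substantive verification being that $A$ non-negligible implies $A^*$ is approximately preclusive at the $\e$ level, which follows from $\mu(A)\geq\e$.
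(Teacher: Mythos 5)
Your chain argument (the ascending sequence of dominating co-events, terminating by finiteness of $\O$, $\EA$ and $\CE$) is exactly the paper's proof --- the paper itself simply defers to the proof of Lemma \ref{lemma:mult existence primitivity} --- and you are right that multiplicativity, the principal-filter correspondence, domination and primitivity all transfer verbatim to the approximate setting. The gap sits precisely at the step you flag as the crux. Your claim that ``$A$ not a subset of a null set'' implies ``$A$ not a subset of an $\e$-null set'' runs the implication backwards: every null set is $\e$-null, so it is non-$\e$-negligibility that implies non-negligibility, not conversely; containment in a set of measure less than $\e$ in no way produces containment in a set of measure zero. Concretely, take the classical coin with $\P(\{h\})=\e/2$: the event $A=\{h\}$ is contained in no null set, yet $\mu(A)<\e$ forces every approximately preclusive co-event to map $A$ to $0$, so no co-event of the kind asserted exists. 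This shows both that your bridging step fails and that the lemma is false if its hypothesis is read literally; the hypothesis must be understood as ``$A$ is not $\e$-negligible'' (not a subset of any set of measure $<\e$), which is how the paper actually applies the lemma --- always to events of measure at least $\e$ under a classical measure.

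Second, even granting $\mu(A)\geq\e$, your inference ``$A\subseteq B$, whence $\mu(B)\geq\mu(A)$'' appeals to monotonicity of $\mu$, which fails for level-$2$ (quantum) measures; the paper stresses exactly this when it notes that, owing to destructive interference, subsets of null sets need not be null. No monotonicity is needed once the hypothesis is stated correctly: if $A$ is not $\e$-negligible then, by definition, every $B\supseteq A$ satisfies $\mu(B)\geq\e$, and since $A^*(B)=1$ exactly when $A\subseteq B$, the co-event $A^*$ is approximately preclusive at the $\e$ level. With that single repair --- re-reading the hypothesis as non-$\e$-negligibility and verifying preclusiveness of $A^*$ straight from the definition --- the rest of your argument goes through unchanged and coincides with the paper's.
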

\begin{proof}
See proof of lemma \ref{lemma:mult existence primitivity}.
\end{proof}

In a classical theory, how we interpret $\p_\e$ depends on the version of Cournot's Principle we are following. Weak Cournot means that the $\p_\e$ are essentially theoretical tools, used to phrase (experimentally falsifiable) predictions in terms of preclusion. In this case $\e$ will be taken from our `meta' level choice of $\e$ used to falsify a given theory, and may not be the same for every system. On the other hand Strong Cournot means that the $\p_\e$ are the potential realities, in which case we may consider $\e$ as a constant of nature. Note that our basic ontology remains unchanged, the `actually real' co-event remains our description of reality and its internal structure (or logic) is still given by the multiplicative rule\footnote{We assume that the multiplicative scheme holds for all histories theories, in a classical theory it `happens' to coincide with the classical scheme.}. What we have done is to alter the set of potential realities \textit{given} a measure, thus essentially we have altered the role \& meaning of the measure.

Moving from classical stochastic theories to quantum mechanics and the multiplicative scheme, we are no longer treating a single history as real\footnote{More precisely we are no longer expecting our primitive multiplicative co-events to be classical} so our objections to Strong Cournot may no longer be relevant. This leads us to question whether it may be possible to achieve Strong Cournot in the context of quantum measure theory, and to take the $\pe$ literally. One advantage of such an approach lies in our practise of treating some events of small probability as null. In section \ref{sec:goals} we raised concerns regarding the preclusion of `approximately null' sets, arguing that true null sets are rare; in fact due to experimental inaccuracies the events we are characterising as null will in general be only of small probability. Approximate preclusion is tailor made to address such concerns, in particular if we are able to adopt Strong Cournot we can preclude such sets directly.

If Strong Cournot, as expressed by approximate co-event, is to be applicable to a general quantum measure theory, it must certainly make sense in the context of classical theories. In the next section we therefore use approximate preclusion to explore the application of Strong Cournot to hypothesis testing in a classical theory; returning to the example of our simple coin (section \ref{sec:classical coin}). In what follows we assume that all co-events are multiplicative.

\subsection{Can we Achieve Strong Cournot?}\label{sec:strong cournot}

\subsubsection{Approximate Preclusion for a Coin}

As before we assume that if we throw the coin we will either have a `heads' (`$h$') or a `tails' (`$t$') result, we will further assume that all such throws are in some sense `equivalent' so that the probability of `heads' is always $p$ and the probability of tails always $1-p$. Thus in the histories formalism, a single throw corresponds to the sample space: \beq\label{eq:coin sample space}
\O=\{h,t\}.
\eeq
In the language of our hypothesis testing technique (section \ref{sec:classical coin}), our hypothesis is that $\P(\{h\})=p$, so that the classical measure $\P$ is:
\bea
\P(\{h\})&=&p,\nonumber \\
\P(\{t\})&=&1-p,\nonumber \\
\P(\{h,t\})&=&1.\label{eq:coin dynamics}
\eea
An `experiment' consisting of $n$ trials (which we assume to be independent) has the sample space
\beq
\O = \{h,t\}^n,
\eeq
of histories (ordered sequences of outcomes) $\gamma=a_1,\ldots , a_n$, where $a_i\in\{h,t\}$. The corresponding event algebra is $P\O$ and since we have assumed that our trials are independent we use the product measure, which by abuse of notation we shall also denote by $\P$.

Now we have previously denoted the number of `heads' outcomes in a history $\g$ by $H(\g)$, so that the proportion of `heads' is $H(\g)/n$. Then we have:
\beq\label{eq:coin single history probability}
\P(\{\g\})=p^{H(\g)}(1-p)^{n-H(\g)}.
\eeq
We are less interested in the individual histories than in the proportion of heads, which, as discussed above, should reflect the underlying probability $p$ of a heads outcome in a single trial. We will denote by $N_H$\ the event that the number of heads realised in the sequence is $H$. Then we have:
\beq\label{eq:coin probability of H heads}
\P(N_H) = \left( \begin{array}{c} n \\ H \\ \end{array} \right) p^{H}(1-p)^{n-H}.
\eeq
Further, we will label by $L_H$ and $G_H$ the events that the number of heads in the realised sequence is less than or equal to or greater than $H$ respectively. Then we get:
\bea
\P(L_H)&=&\sum_{m\leq H}\P(m), \nonumber \\
&=& \CP(N_H). \label{eq:coin proabaility of H or less heads}
\eea
Thus if we performed such an experiment and realised a proportion of heads corresponding to a small $\P(L_H)<\e$ we would reject our hypothesis $\P(h)=p$ (at the $\e$ level). Turning this around, given the assumption $\P(h)=p$ we wish to preclude all events $L_H$ (and $N_H$) with $\P(L_H)<\e$ for some small $\e$.

We can make this more precise; since the measure $\P$ is classical and non-zero everywhere, and since $L_m$ is a proper subset of $L_{m+1}$, we can see that $\P(L_m)$ is monotonic in $m$. Thus defining $H_\e$ as the greatest $H$ such that $\P(L_H)<\e$, we have $\P(L_{H_\e+m})\geq\e ~\forall ~m>0$; in particular $\P(L_{H_\e})<\e\leq\P(L_{H_\e+1})$. Thus, following the argument of section \ref{sec:classical coin} we would like to say that $L_{H_\e}$ is precluded while $L_{H_\e+1}$ is not. Since $\P(L_H)>0$ in all cases, we turn to approximate preclusion.

Adopting Strong Cournot, we fix $\e$ and assume that events of measure less than $\e$ never occur. Then we can immediately rule out $L_{H_\e}$, indeed $\p_\e(L_{H_\e})=0$ for all approximately preclusive co-events $\p_\e$. However, this does not hold for $L_{H_\e+1}$, indeed since $\P(L_{H_\e+1})>\e$ by lemma \ref{lemma:approximate mult existence primitivity} there exists a primitive approximate co-event $\p_{H_\e}$ mapping it to unity. In other words:
\bea
\p_{H_\e}(L_{H_\e})&=&0, \nonumber \\
\p_{H_\e}(L_{H_\e+1})&=&1.
\eea
Thus every potential reality precludes $L_{H_\e}$, but not every potential reality precludes $L_{H_\e+1}$. This is enough to allow us to use the hypothesis testing technique outlined in section \ref{sec:classical coin}, thus enabling us to deal with (experimentally falsifiable) predictions in terms of our approximate co-events. Since approximate preclusion is defined in terms of a general histories theory, this seems to raise the possibility of a predictive approximate co-event framework for a general quantum measure theory. However, though widening the scope of preclusion by moving from exact to approximate co-events has introduced useful features to our theory, we must make sure that it has not also introduced problems.

\subsubsection{The Failure of Strong Cournot}

The above construction seems promising, it seems that our shift from exact to approximate preclusion has succeeded in introducing new \& useful features to co-event theory; however we must check that it has not inadvertently introduced new problems as well. In dealing with exactly preclusive co-events we can show that (under certain conditions) classical measures always imply classical outcomes meaning that the co-events will necessarily behave classically (see section \ref{sec:the multiplicative scheme}); do the approximate co-events always `behave'? Such concerns lead to three objections to the construction we outlined above:

\begin{description}

\item[I The Status of Single Histories]~

    The first question to be raised regards the status of single histories. In a system obeying non-classical dynamics we may be happy with single histories not being realised, however in this system with its classical dynamics we could still find all single histories to be ruled out for a sufficiently large number of trials. This raises interpretational issues, for example if we take $p=1/2$, $\e=10^{-3}$ and $n=10$ we find all single histories ruled out. However if $n=9$ all single histories are allowed. We have several problems here, firstly we may not be comfortable with the non-classical behaviour of the $n=10$ system. Secondly there is the question of the value of $\e$, which leads to classical outcomes in the $n=9$ case but not the $n=10$ case. Finally it seems odd that one additional trial will `disallow' classical behaviour, particularly given that a single trial behaves classically in and of itself.

    Note that we don't have a causality problem here; although in the $n=10$ system no single history is realised this will not become apparent in the first trial. The two outcomes of the first trial can be thought of as a coarse grained partition of the $n=10$ sample space, and as such each outcome has probability greater than $\e$. This will also hold in general, any single history in the $n=10-m$ system (where $1\leq m\leq 9$) corresponds to a coarse grained event in the $n=10$ system that has probability greater than $\e$ and so is not precluded.

    Further note that this problem of systems that behave classically becoming non-classical in a repeated trial can occur in the histories framework without the use of co-events. We could take the view that no system is truly classical, but rather that at the level of observable events we have an emergent classicality based on environmental decoherence. Adopting this view, observed `classical' behaviour is due to approximate decoherence \cite{Sorkin:private,Dowker:private}, and repeated trials of an approximately decohering system (such as a `quantum coin' with interference $\varepsilon$ between heads and tails) may lead to non-classical outcomes, notably the reappearance of quantum coherence \cite{Sorkin:private}.

\item[II The Problems of Assigning Ontology to the Approximate Co-Events]~

    The second question to be raised concerns the range of possible co-events `allowed' by approximate preclusion. We may be able to construct allowed co-events $\phi_\e$ such that some observable questions have no definite answer. In other words our framework allows potential realities that if realised would imply observable anhomomorphisms.

    For example, $\{L_{H_\e},G_{H_\e}\}$ is a partition of $\O$, both  elements of which are observable (since all events are observable in this classical system). Then using lemma \ref{lemma:approximate mult existence primitivity} there will be at least one primitive approximate co-event $\p_{G_\e}$ mapping $G_{H_\e}$ to $1$, and $\p_{G_\e}$ will map $L_{H_\e}$ to $0$ (as do all approximate co-events) because $\P(L_{H_\e})<\e$. However, although we can `cherry-pick' the co-event $\p_{G_\e}$ to treat the partition $\{G_{H_\e},~L_{H_\e}\}$ classically, we do not know if $\p_{H_\e}$ will map $G_\e$ to one and so treat this partition classically. More generally we do not have any guarantee that all the allowed co-events will treat this partition classically; in fact we can find primitive approximate co-events that map both elements of the observable partition $\{G_{H_\e},L_{H_\e}\}$ to zero.

    We can find subsets of histories $T\subset L_{H_\e}$ and $P\subset G_{H_\e}$ such that $\P(T),\P(P)<\e$ but $\P(T\sqcup P)>\e$.  Now $\P(T\sqcup P)>\e$ means that by lemma \ref{lemma:approximate mult existence primitivity} some subset $C$ of $T\sqcup P$ will be the dual of a primitive approximate co-event $C^*$. Further both $T$ and $P$ are approximately precluded so $C$ cannot be a subset of either; therefore $C$ has non-empty intersection with both $L_{H_\e}$ and $G_{H_\e}$, which means:
    \bea
    C^*(L_{H_\e})&=&0 \nonumber \\
    C^*(G_{H_\e})&=&0.
    \eea
    For an explicit example consider the case $p=1/2,~n=10^3$ and $\e=10^{-3}$; then $H_\e=450$ since $\P(L_{450})<\e$ whereas $\P(L_{451})\geq\e$. Now $L_{451}=L_{450}\sqcup N_{451}$, so there is some (not necessarily unique) subset $S\subseteq N_{451}$ such that $\P(S\sqcup L_{450})\geq\e$ whereas $\P((S\setminus\{\g\})\sqcup L_{450})<\e$ for any $\g\in S$. We can think of constructing $S$ by adding fine grained histories from $N_{451}$ to $L_{450}$ one by one until the measure is greater than $\e$; thus we can not reduce the set $S\sqcup L_{450}$ without its measure falling below $\e$. In fact, because $p=1/2$ (and the measure is classical) every single history $\g$ contributes the same amount, $p^n=2^{-10^3}$, to the probability of any event containing $\g$, so that $\P(S)=p^n |S|$. Further, since the measure is classical we know that $\e\leq\P(S)+\P(L_{450})<\e+p^n$. Therefore: \beq
    p^{-n}(\e - \P(L_{450})) \leq |S| < p^{-n}(\e - \P(L_{450})) + 1,
    \eeq
    so that $S$ could be any subset of $N_{451}$ of cardinality:
    \bea
    |S|&=&Int(\frac{\e-\P(L_{450})}{p^n})\nonumber \\
    &\approx& 1.4\times10^{297},
    \eea
    where $Int(x)$ denotes the least integer which is greater than or equal to $x$. Then the set $S\sqcup L_{450}$ has measure greater than or equal to $\e$, and so is not a subset of any $\e$-null set (since the measure is classical). Further, we cannot reduce our set without the remainder being $\e$-negligible. Therefore $S\sqcup L_{450}$ is the base of an approximate co-event $(S\sqcup L_{450})^*$ that maps both $G_{H_\e}$ and $L_{H_\e}$ to zero.

\item[III The Inconsistency of Multiple Partitions]~

    Finally, in the examples we have looked at so far, we considered a single partition $\O=G_{H_\e}\sqcup L_{H_\e}$ which was treated non-classically by some allowed co-events. Intuitively, both elements of this partition correspond to `meaningful' propositions; that the proportion of heads in the observed history is less than or equal to ($L_{H_\e}$), or greater than ($G_{H_\e}$), $H_\e/n$. However, we did not attach any particular `meaning' to our `problem co-event' $C^*$, indeed as a proposition the related event $C$ is `pathological' in that it would be difficult to express it as an `English sentence' as we were able to do for $L_{H_\e}$ and $G_{H_\e}$ above. This might lead one to speculate that we could perhaps focus on partitions and co-events corresponding to `good', `meaningful' or `useful' propositions, whilst ignoring partitions and co-events corresponding to `bad' or `pathological' propositions; indeed we were able to find the `good' approximate co-event $\p_{G_\e}$ that was classical on the `good' partition $\O=G_{H_\e}\sqcup L_{H_\e}$.

    However the situation is not so simple; in the above we focused on a single partition, whereas there are many `good' partitions, each with its associated `good' co-events. Unfortunately the co-events that treat one partition classically will not in general be classical on another partition, as the following example shows.

    When the number of trials of our coin is even, $n=2m$, we can partition the sample space $\O$ into even and odd coarse grainings as follows. Given a single history $\g=a_1,a_2,\ldots,a_{2m}$ (where $a_i\in\{h,t\}$) we can form the \textit{even} and \textit{odd} histories $E(\g)=a_2,a_4,a_6,\ldots,a_{2m}$ and $O(\g)=a_1,a_3,a_5,\ldots,a_{2m-1}$. Likewise we can form the even and odd coarse grainings $\O_E=\{E(\g)|\g\in\O\}$ and $\O_O=\{O(\g)|\g\in\O\}$, which inherit their associated measures from $\O$, and it is easy to see that $\O=\O_E\sqcup\O_O$. Building on this, we can treat both $\O_E$ and $\O_O$ in the same way that we previously treated $\O$ by looking at the number of heads in the even and odd trials, $H^E$ and $H^O$ respectively. Given an $\e$ we can go on to define $H^E_\e,H^O_\e$, and the subsets $G_{H^E_\e},G_{H^O_\e}$ and $L_{H^E_\e},L_{H^O_\e}$ of the two distributions. Then, following the analysis above, we can define the co-events $\p^E_{G^E_\e},~\p^O_{G^O_\e}$  corresponding to the questions `is the number of heads in the even distribution greater than or equal to $H^E_\e$?' and `is the number of heads in the odd distribution greater than or equal to $H^O_\e$?'. Now as before we have
    \bea
    \p^E_{G^E_\e}(L_{H^E_\e})&=&0 \nonumber \\
    \p^E_{G^E_\e}(G_{H^E_\e})&=&1, \label{eq:approx coevent multiple partitions classical}
    \eea
    however now we also have
    \bea
    \p^E_{G^E_\e}(L_{H^O_\e})&=&0 \nonumber \\
    \p^E_{G^E_\e}(G_{H^O_\e})&=&0, \label{eq:approx coevent multiple partitions nonclassical}
    \eea
    with $\p^O_{G^O_\e}$ showing similar behaviour.

    For an explicit example consider the case $p=1/2$, $m=10^3$ and $\e=10^{-3}$; thus the even and odd distributions are similar to the example considered above. Then $H^E_\e=H^O_\e=450$, and as before every single history $\g$ contributes the same amount, $p^{2m}=2^{-2*10^3}$, to the probability of any event containing $\g$; thus the measure of any event $S$ is given by $\P(S)=p^{2m} |S|$. From this we can see that an event $S$ is a subset of an approximately precluded set $T$ if and only if $S$ itself is approximately precluded, thus an approximate co-event $\p_\e$ is preclusive if and only if its dual $\p_\e^*$ has measure greater than or equal to $\e$; then if $\p_\e$ is preclusive
    \beq\nonumber
    \e \leq \P(\p_\e^*) = p^{2m}|\p_\e^*|,
    \eeq
    so that
    \beq\nonumber
    |\p_\e^*| \geq \e p^{-2m} = 10^{-3} \times 2^{2\times 10^3}.
    \eeq
    Thus $\p_\e$ is primitive if and only if $|\p_\e^*|=Int(\e p^{-2m})$, where $Int(x)$ denotes the least integer that is greater than or equal to $x$. Now let $\g_E=a_1,a_2,\ldots,a_{2m}$ be the history defined by the outcomes
    \beq\nonumber
    a_i = \left\{\ba{cc} h & i~even \\ t & i~odd. \ea\right.
    \eeq
    Then $E(\g_E) = hhhh\ldots$ and $O(\g_E) = tttt\ldots$, so $\g_E\in L_{H^O_\e}\cap G_{H^E_\e}$. Now it is easy to see that $\P(G_{H^E_\e})>\e$, so that $|G_{H^E_\e}|>\e p^{-2m}$ and thus we can find an event $C\subset G_{H^E_\e}$ of cardinality $Int(\e p^{-2m})$ that contains the history $\g_E$. But then setting $\p^E_{G^E_\e}=C^*$, we have found a primitive approximately preclusive co-event satisfying equations \ref{eq:approx coevent multiple partitions classical} \& \ref{eq:approx coevent multiple partitions nonclassical}. A similar construction can be made for $\p^O_{G^O_\e}$.

    This is in fact a coarse graining problem (as are most of the problems in the multiplicative scheme) reminiscent of the interpretational problems of the consistent histories approach. Essentially we can recover the probabilities in the measure (or decoherence functional) but we are also recovering the interpretational problems in the sense that different coarse grained partitions have become incompatible. In consistent histories each question `makes sense' in one decoherent partition, but may not have a classical answer in other decoherent partitions. In our case every approximate co-event (which corresponds in a natural way to a question via its dual) will treat at least one partition classically (for example the partition consisting of its dual and the complement thereof) but may not treat other partitions classically, even though these partitions are dynamically classical (in that they decohere). However this feature is more of an issue for approximate preclusion than for consistent histories, since, as our coin example shows, approximate preclusion may have difficulties even when the underlying fine grained histories obey classical dynamics, a problem not shared with consistent histories.
\end{description}

We can gain further insight into these problems by applying the machinery of the `principle classical partition' introduced in appendix \ref{appendix:principle classical partition}. Essentially, even though the measure is classical we are finding non-classical behaviour at the level of the approximate co-events. This might suggest that our observable partitions are finer grained that the principle classical partition, and indeed while we are assured that when using exact preclusion an classical measure will lead to a fully fine grained principle classical partition (every non-null singleton set is an element of the partition), we do not have such a guarantee when using approximate preclusion.

It is instructive, therefore, to calculate the principle classical partition for our repeated trial; to provide an explicit example we set $p=1/2,~n=10^3$ and $\e=10^{-3}$ as before. Then every history $\g\in\O$ is of equal probability $\P(\{\g\})=2^{-10^3}$, and the Kolmogorov sum rule means that every set $S\subset\O$ of trials has a probability determined solely by its cardinality, $\P(S)=2^{-10^3}|S|$. Then if we set $m=Int(2^{10^3}\e)$ (where $Int(x)$ denotes the least integer which is greater than or equal to $x$), we see that $\P(S)\geq\e$ for every set of cardinality greater than or equal to $m$ and that $\P(S)<\e$ for every set of cardinality less than $m$. Thus if $S_m$ is of cardinality $m$, it is not itself approximately precluded, however every subset of $S_m$ is approximately precluded. Thus $S_m^*$ is a primitive approximately preclusive co-event. Since $m>2$, given any two trials $\g_1,~\g_2\in\O$ we can find a primitive approximately preclusive co-event $\p_\e=S_m^*$ whose dual contains both trials, $\g_1,\g_2\in S_m$. In other words, if we fix $\g_1$, given any $\g_2\in\O$ we can find a primitive $\p_\e^{\g_2}$ whose dual contains both $\g_1$ and $\g_2$. But then the transitive closure of the duals of all these primitive approximate co-events is the whole sample space $\O$, and the principle classical partition is simply $\{\O\}$.

As a final comment on Strong Cournot, we note that the value of $\e$ used in this section may be considered `large', in that we can find experimentally observable events with smaller probability. Following this line of thought we might attempt to evade the above objections to Strong Cournot by interpreting the results outlined above as placing a `cap' on the level of $\e$ (which in Strong Cournot we are taking to be a constant of nature). Thus we would conclude that the value of $\e$ must be small compared to the probabilities of observable events (including experimentally achievable repeated trials)\footnote{Repeated trials in which $n$ is not so large that the repeated trial is experimentally unrealisable in practise}. There are however objections to this argument. Firstly, such an approach would be unlikely to address the questions surrounding the status of single histories; in particular, with regard to an anticipated theory of quantum gravity, the individual `histories of the universe' may not have `large' measure. Secondly, and more seriously, it is our use of $\e$ ro preclude observable events which allows us to discuss probability in the language of co-events, the very reason we turned to Cournot's Principle. Decreasing the value of $\e$ decreases our ability to falsify probability hypotheses, so a value of $\e$ so small as to avoid the problems outlined above would not give approximate preclusion any significant advantage over exact preclusion in the discussion of dynamics. Thus even if we adopted Strong Cournot with such a small $\e$, we would in any case be driven to some other mechanism (such as Weak Cournot) to deal with dynamics \& prediction. For these reasons, the authors prefer to discard Strong Cournot in favour of Weak Cournot.

\subsection{Quantum Operational Weak Cournot}\label{sec:quantum weak cournot}

Although Strong Cournot may be more philosophically satisfying, in particular with regard to co-event theory, the arguments of section \ref{sec:strong cournot} above conclusively demonstrate the failure of its application to multiplicative co-events due to violations of observable classicality. As in classical stochastic theories, this leads us to fall back on Weak Cournot. Though perhaps less philosophically satisfying than the strong variety, it may be the `best we can do', at least at the present time.

Our application of Operational Weak Cournot to quantum histories theories is similar in some ways to its classical application (section \ref{sec:cournot's principle}). We again evoke the `meta-level' process by which we falsify theories to give us $\e$, which consequently is no longer considered as a constant of nature and may be different for different systems. Our restriction of approximate preclusion to events singled out ahead of a repeated trial avoids the problems of multiple partitions and single histories encountered above. Essentially, we depart from Strong Cournot by introducing a split between the ontology and the predictive content of the theory; whereas in Strong Cournot both are described by a primitive approximate co-event we now propose:
\newline\newline
\textbf{Ontology}: The potential realities are primitive exactly preclusive (multiplicative) co-events. In the example of the classical coin above, since the measure is classical, these primitive co-events correspond to single fine grained histories.
\\\textbf{Predictions}: In an experiment consisting of repeated trials, an experimentally observable event $A$ singled out in advance, of measure less than $\e$, will not occur. We can of course phrase this in terms of approximate co-events, replacing the phrase `of measure less than $\e$' by `precluded by all approximate co-events $\p_\e$'. However this is tautological, and the use of approximate co-events is now rather vacuous since the ontology is given by primitive exactly preclusive co-events.
\newline\newline
This formalism aims to allow us to use the hypothesis testing technique described in section \ref{sec:classical coin} to test `Copenhagen predictions' (discussed in section \ref{sec:goals}) through experiments consisting of repeated trials, for both classical and non-classical systems. There are, however, several objections we can make.

Firstly, the restriction of our predictive ability to such experiments may obstruct the application of Weak Cournot beyond the Copenhagen framework, and the need to single out our predictions in advance of an experiment raises questions regarding observer independence; a hallmark of the histories approach. Thus even if Weak Cournot is successful in justifying the Copenhagen predictive mechanism from the perspective of quantum measure theory it is doubtful that it will assist us to develop a wider conception of the predictive content of a general quantum measure theory as hoped for in section \ref{sec:goals}. Simply by choosing to focus solely on (experimentally falsifiable) predictions we have constructed a formalism in which we cannot attach any (`physical') meaning to non-predictive dynamical statements. While such an `instrumentalist' view of dynamics may be appealing to some \cite{Dowker:private}, the authors feel that the dynamics should express something about the structure of reality, even in the absence of observation.

Had Literal Strong Cournot been applicable it would have avoided such issues, since Cournot's Principle would have been applied at the ontological level and thus its implications would have been observer independent. More importantly this split between the ontology and the predictive mechanism separates us not only from Strong Cournot but also from the application of Weak Cournot in classical physics. In a classical theory we typically assume an underlying deterministic dynamics; the dynamical statements of this `true theory' would then have observer independent meaning, allowing probability to be considered as an instrumentalist phenomena without any implication for the interpretation of this `true dynamics'.

Further, in a classical stochastic theory our statement of Weak Cournot is made in terms of the event algebra and the measure, which are both defined in terms of the histories; the potential realities of the theory. In quantum measure theory our potential realities are primitive exactly preclusive multiplicative co-events, however our formulation of Weak Cournot is still expressed in terms of events and the measure defined on them, or alternatively in terms of approximate co-events. Thus contrary to the goals we set out in section \ref{sec:goals} we have not been able to express the predictive content of quantum measure theory in terms of its potential realities.

Finally, by adopting the above formulation of Weak Cournot we have not made headway in our attempt to deal with `almost null' sets (section \ref{sec:goals}). Given our practice of treating `almost null' sets as `exactly null' by assuming idealised conditions our inability to address this issue is a cause for concern.

\section{Conclusion}\label{sec:conclusion}

Our aim in this paper has been to rephrase the dynamical \& predictive content of a quantum histories theory in terms of (multiplicative) co-events, or at the very least to demonstrate that this is possible in principle even if in practice it may be simpler to work with event algebras. In particular, we were concerned that due to experimental inaccuracies the events that we are characterising as null will in general be only of small probability.

Our first, and potentially most appealing proposal was to shift the dynamics wholesale onto the co-events, `completing' the co-event program by moving from `histories theories' to `co-event theories' in which we could use co-events and the dynamics defined on them to deal directly with physical systems. This would have allowed us to deal with events of small probability by assigning small probabilities to the co-events that found those events to be true. Unfortunately this approach has stalled.

Our second proposal was approximate preclusion. Following Strong Cournot we aimed to explain the emergence of probability from preclusion by altering the status of the measure with regard to the allowed co-events. This would have allowed us to deal with events of small probability by directly precluding them. Again, this approach has not met with success.

Finally we have been driven to adopt Weak Cournot, which can perhaps be used to justify the (existing) Copenhagen experimental framework from the vantage point of a quantum histories theory. However we do not manage to express our predictions in terms of the `potentially real' co-events themselves, and make no headway in extending our understanding of a histories theory's predictive content beyond the narrow Copenhagen framework. Further, we are left in a quandary regarding the application of preclusion in practice. As we have previously pointed out the events that we are assuming to be null, for example in the double slit experiment, are only found to have zero measure following the assumption of an idealised system that we know cannot hold in practice. Thus for our results to have application to the real world an alternate solution to this problem must be found.

\section{Acknowledgments}

The authors want to thank Fay Dowker, Rafael Sorkin and Sumati Surya for many insightful discussions on the co-event interpretation. The authors further thank Fay Dowker, Anna Gustavsson and Rafael Sorkin for their comments on drafts of this paper. This work was partly supported by the Royal Society-British Council International Joint Project 2006/R2. YGT was supported by an STFC studentship.

\appendix
\section{The Principle Classical Partition}\label{appendix:principle classical partition}

For the purposes of this appendix, where there is no scope for confusion we will sometimes abbreviate `classical with respect to $\M$' to `classical'. We can extend the notion of fine graining to a partial order on the partitions of $\O$. We say that a partition $\la$ of $\O$ is a \emph{fine graining of}, or equivalently is \emph{less than}, a partition $\Theta$ of $\O$ if $\Theta$ can be regarded as a partition of $\la$. Equivalently we could say that $\Theta$ is a \emph{coarse graining}, or is \emph{greater than}, the partition $\la$. It is easy to see that fine graining induces a partial order on the space of partitions of $\O$.

Restricting to classical partitions, we can show that this partial order contains a unique minimal element, the `finest grained' classical partition, which we call the \emph{principle classical partition} (with respect to $\M$).

\begin{theorem}\label{thm:principle classical partition}
Let $\H$ be a histories theory with a finite sample space. Then there exists a classical partition (with respect to $\mu$) $\la$ of $\O$ that is minimal among the classical partitions of $\O$. Furthermore $\la$ is unique.
\end{theorem}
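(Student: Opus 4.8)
The plan is to reduce the statement to a clean combinatorial fact about the duals $\p^*$ of the primitive co-events, and then to build $\la$ explicitly as the partition of $\O$ into the ``connected components'' of these duals. First I would establish the key characterisation: a multiplicative co-event $\p$ is classical on a partition $\la$ if and only if its dual $\p^*$ is contained in a single block of $\la$. For the forward direction, note that the subalgebra of $\EA$ generated by $\la$ is just the set of unions of blocks and is itself (as a ring over $\Z2$) the power set of the set of blocks; by Lemma \ref{lemma:classical coevents are homomorphisms} applied to this coarse-grained algebra, a non-zero homomorphism on it is precisely point evaluation at a single block. Since $\p(\O)=1$ the restriction $\p|_{\langle\la\rangle}$ is never the zero map, so if it is a homomorphism it must send exactly one block $A_i$ to $1$, and by the filter/dual correspondence this forces $\p^*\subset A_i$. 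Conversely, if $\p^*\subset A_i$ for a single block then $\p(A_i)=1$, $\p(A_j)=0$ for $j\ne i$ (the blocks are disjoint, so $\p^*\not\subset A_j$), and $\p$ selects block $A_i$ on every union of blocks, which is exactly point evaluation and hence a homomorphism. If instead $\p^*$ meets two distinct blocks, then $\p$ sends every block to $0$ while $\p(\O)=1$, violating linearity.

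With this in hand, $\la$ is classical with respect to $\M$ exactly when each dual $\p^*$ (for $\p\in\M$) lies inside one block. Next I would construct the candidate minimal partition directly. Define a relation on $\O$ by $x\sim y$ whenever there is some $\p\in\M$ with $x,y\in\p^*$, and let $\la_C$ be the partition of $\O$ into the equivalence classes of the reflexive--transitive closure of $\sim$; this is well defined and finite since $\O$ is finite, and $\M\ne\emptyset$. Each $\p^*$ is a $\sim$-clique, hence lies entirely within a single equivalence class, so every $\p^*$ sits inside one block of $\la_C$, and by the characterisation $\la_C$ is classical with respect to $\M$.

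Finally I would prove minimality and uniqueness together by showing $\la_C$ is the least classical partition in the fine-graining order. Let $\Theta$ be any classical partition. By the characterisation each $\p^*$ lies in a single block of $\Theta$, so $x\sim y$ implies $x,y$ share a block of $\Theta$; passing to the transitive closure, every block of $\la_C$ is contained in a single block of $\Theta$, i.e.\ $\la_C$ is a fine graining of (is $\le$) $\Theta$. Thus $\la_C$ is itself classical and lies below every classical partition, so it is the unique minimal element, establishing both existence and uniqueness.

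The main obstacle I expect is the characterisation in the first step: one must be careful that ``classical on $\la$'' refers to the restriction to the \emph{generated} subalgebra being a genuine non-zero homomorphism, and that the only way a multiplicative co-event can fail linearity there is by having its principal element straddle two blocks. Once this equivalence between classicality and ``the dual fits in one block'' is nailed down, the remainder is a routine transitive-closure argument and the partial-order reasoning is immediate. As a consistency check, this is exactly the construction invoked in section \ref{sec:strong cournot}, where the transitive closure of the duals is computed to be all of $\O$, yielding the principle classical partition $\{\O\}$.
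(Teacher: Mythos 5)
Your proof is correct, but it takes a genuinely different route from the paper's. Both arguments hinge on the same key characterisation---a multiplicative co-event is classical on $\la$ precisely when its dual $\p^*$ lies inside a single block---which is the paper's Lemma \ref{lemma:classicality equals homomorphism}; your derivation of it via Lemma \ref{lemma:classical coevents are homomorphisms} applied to the quotient algebra $P\la$ is a clean alternative to the paper's direct case analysis (which rules out ``all blocks map to $0$'' by linearity and ``two blocks map to $1$'' by multiplicativity and disjointness). Where you diverge is in the main argument. The paper proves existence non-constructively (the partition $\{\O\}$ is always classical, and a finite nonempty poset has a minimal element) and then obtains uniqueness by a common-refinement argument: if $\la_A$ and $\la_B$ were two distinct minimal classical partitions, each dual $\p_i^*$ would lie in some $A_{j_i}\cap B_{k_i}$, so $\{A_j\cap B_k\}$ would be a classical partition finer than both, a contradiction. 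You instead construct the minimum outright, as the equivalence classes of the reflexive-transitive closure of ``$x,y$ lie in a common dual'', and show it is a lower bound for every classical partition. This is exactly the paper's fat co-event construction (the intersection equivalence $\approx_\cap$, the sets $\PF_i$, and the singleton remainder $Z$), which the paper defers to a separate result, Lemma \ref{lemma:principle classical partition construction}, proved after the theorem. Your route is therefore longer than the paper's proof of the theorem alone, but it buys more: it delivers the explicit form of the principle classical partition in the same stroke, subsuming Lemma \ref{lemma:principle classical partition construction}, and it produces a genuine minimum directly rather than deducing coincidence of minimal elements by contradiction. One minor remark: your appeal to $\M\neq\emptyset$ is not actually needed (if $\M$ were empty the closure would be trivial and the discrete partition would be the minimum), though since the paper guarantees $\M\neq\emptyset$ for finite $\O$ the point is moot.
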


To prove this we first need a technical lemma:

\begin{lemma}\label{lemma:classicality equals homomorphism}
Let $\H$ be a histories theory with a finite sample space. Then the partition $\la$ of $\O$ is classical with respect to $\mu$ if and only if for every $\p\in\M$ there exists $A_i\in\la$ such that $\p^*\subset A_i$.
\end{lemma}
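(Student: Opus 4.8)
The plan is to prove the stated biconditional pointwise in $\p$. Since $\la$ is classical with respect to $\M$ precisely when every $\p\in\M$ is classical on $\la$, it suffices to show, for a fixed multiplicative co-event $\p$ with principal element $\p^*$, that $\p$ is classical on $\la$ if and only if $\p^*\subset A_i$ for some block $A_i\in\la$. First I would record the two structural facts the argument leans on. The subalgebra of $\EA$ generated by $\la$ consists precisely of the unions of blocks of $\la$; as a ring it is isomorphic to $P\la$, whose atoms are the blocks $A_i\in\la$. Second, by the principal-filter description established earlier, for every $B\in\EA$ we have $\p(B)=1$ if and only if $\p^*\subset B$; in particular $\p(\O)=1$, so the restriction of $\p$ to this subalgebra is never the zero map, and $\p^*\neq\emptyset$ since $\p(\emptyset)=0$.

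For the direction assuming $\p^*\subset A_i$, I would show that the restriction of $\p$ to the subalgebra generated by $\la$ coincides with evaluation at the block $A_i$, i.e. $B\mapsto 1$ iff $A_i\subset B$ for $B$ a union of blocks. If $A_i\subset B$ then $\p^*\subset A_i\subset B$ forces $\p(B)=1$. Conversely, if $\p(B)=1$ then $\p^*\subset B$, so $\p^*\subset A_i\cap B$; since $A_i$ is a block and $B$ is a union of blocks, $A_i\cap B$ is either $A_i$ or $\emptyset$, and the nonemptiness of $\p^*$ rules out the latter, giving $A_i\subset B$. Thus the restriction is evaluation at $A_i$, which by Lemma \ref{lemma:classical coevents are homomorphisms} (applied to the coarse-grained sample space $\la$) is a ring homomorphism, so $\p$ is classical on $\la$.

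For the converse, suppose the restriction of $\p$ to the subalgebra generated by $\la$ is a homomorphism. It is nonzero because $\p(\O)=1$, and the subalgebra is isomorphic to $P\la$, so by Lemma \ref{lemma:classical coevents are homomorphisms} this restriction must be one of the evaluation maps at an atom, i.e. evaluation at some block $A_i\in\la$. In particular $\p(A_i)=1$, which by the principal-element characterization gives $\p^*\subset A_i$, as required. Quantifying over all $\p\in\M$ then yields the lemma.

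The main obstacle I anticipate is the forward direction $\p^*\subset A_i\Rightarrow$ classicality, where one must verify that the restriction genuinely collapses to a single-block evaluation rather than merely remaining multiplicative: the argument hinges on combining the disjointness of the blocks with the nonemptiness of $\p^*$, and it is exactly the failure of this collapse (when $\p^*$ meets several blocks) that produces the non-homomorphic behaviour exhibited by $\O^*$ in the coin example. The remaining care is bookkeeping, namely ensuring the appeal to Lemma \ref{lemma:classical coevents are homomorphisms} is legitimately made for the coarse-grained sample space $\la$, and keeping the pointwise equivalence aligned with the quantifier ``for every $\p\in\M$'' in the statement.
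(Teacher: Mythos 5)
Your proof is correct, and it differs from the paper's in a worthwhile way. The paper proves the same pointwise biconditional but works by hand: in the ``classical $\Rightarrow$ block'' direction it applies linearity to $\O=\bigsqcup_i A_i$ to rule out $\p(A_i)=0$ for all $i$ (since otherwise $\p(\O)=0$ and multiplicativity forces $\p\equiv 0$), and applies multiplicativity to disjoint blocks to rule out two blocks being mapped to $1$ (since that forces $\p(\emptyset)=1$), leaving exactly one $A_j$ with $\p(A_j)=1$ and hence $\p^*\subset A_j$; in the converse it computes the restriction explicitly as $\p(B_I)=1$ iff the distinguished index lies in $I$, and asserts this is a homomorphism. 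You instead route both directions through Lemma \ref{lemma:classical coevents are homomorphisms} applied to the coarse-grained sample space $\la$: once the restriction is identified (via the principal-filter characterization and disjointness of blocks) with evaluation at a single block, that lemma certifies it is a homomorphism, and conversely the lemma's classification of nonzero homomorphisms on $P\la$ as atom evaluations immediately yields $\p(A_i)=1$ for some block, hence $\p^*\subset A_i$. The paper's version is self-contained and pinpoints exactly which axiom fails in each degenerate case (which is instructive, given the $\O^*$ coin example); yours is more modular, reuses existing machinery instead of reproving the classification inline, and makes explicit the conceptually useful fact that a co-event classical on $\la$ restricts to precisely the coarse-grained classical co-event $A_i^*$ on $(\la,P\la,\mu_\la)$. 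Your two bookkeeping caveats are handled correctly: the subalgebra generated by $\la$ is indeed ring-isomorphic to $P\la$ because the blocks are disjoint, and Lemma \ref{lemma:classical coevents are homomorphisms} is purely algebraic, so applying it to the coarse graining is legitimate.
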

\begin{proof}
First assume that $\la = \{A_i\}_{i=1}^n$ is a classical partition of $\O$. Then by assumption $\p\in\M$ will behave classically on $\la$, which means it will act as a homomorphism; obeying linearity (equation \ref{eq:linearity}) and multiplicativity (equation \ref{eq:multiplicativity}) when restricted to the subalgebra generated by $\la$. Then:
\begin{enumerate}
  \item Assume that $\p(A_i)=0~\forall i$; linearity then implies that $\p(\O)=0$. Then multiplicativity implies that $\p$ is the zero map, which is excluded from $\M$, contradicting our assumptions.
  \item Assume $\p$ maps more than one element of $\la$ to one; without loss of generality $\p(A_1)=\p(A_2)=1$. Noting that the $A_i$ are disjoint (as elements of a partition), multiplicativity implies $\p(\emptyset)=\p(A_1A_2)=\p(A_1)\p(A_2)=1$, contradicting our assumption that $\p$ is a co-event.
\end{enumerate}
Therefore there exists $j$ such that $\p(A_i)=\delta_{ij}$. But then $\p^*\subset A_j$.

To prove the converse, note that because the $A_i$ are disjoint every element of the subalgebra generated by $\la$ is of the form $B_I=\bigsqcup_{i\in I} A_i$ for some indexing set $I$. Now let $\p\in\M$, then by assumption there exists $A_k\in\la$ such that $\p^*\subset A_k$. Then for all $B_I$ in the subalgebra generated by $\la$ we have:
\beq\nonumber
\p(B_I) = \left\{\ba{cc} 1 & k\in I \\ 0 & k\not\in I. \ea\right.
\eeq
Thus it is easy to see that $\p$ is a homomorphism on the subalgebra generated by $\la$.
\end{proof}

We are now in a position to prove the theorem.

\begin{proof}{\textit{of theorem \ref{thm:principle classical partition}}}
\begin{description}
  \item[Existence:] The partition $\{\O\}$ is always classical.
  \item[Minimality:] Because $\O$ is finite the set of partitions thereof, and in particular the set of classical partitions thereof, is also finite. It is easy to see that the `fine graining of partitions' defines a partial order in the set of classical partitions of $\O$, which must therefore contain a minimal element $\la$.
  \item[Uniqueness:] We enumerate the primitive preclusive multiplicative co-events by setting $\M=\{\p_i\}_{i=1}^{m}$ where $m=|\M|$. Now let $\la_A=\{A_i\}_{i=1}^{n_A}$ and $\la_B=\{B_i\}_{i=1}^{n_B}$ be distinct partitions that are both minimal among the set of partitions of $\O$ that are classical with respect to $\M$. Then for every $\p_i\in\M$ lemma \ref{lemma:classicality equals homomorphism} gives us $A_{j_i}\in\la_A$ and $B_{k_i}\in\la_B$ such that $\p_i^* \subset A_{j_i}$ and $\p_i^*\subset B_{k_i}$. But then $\p_i^*\subset A_{j_i}\cap B_{k_i}$ for all $i$, so the by lemma \ref{lemma:classicality equals homomorphism} the partition $\la_{AB}=\{A_j\cap B_k\}$ is classical and is finer than both $\la_A$ and $\la_B$, contradicting our assumption. Therefore $\la_A=\la_B$, and the principle classical partition is unique.
\end{description}
\end{proof}

In what remains of this appendix we will enumerate the primitive preclusive multiplicative, $\M=\{\p_i\}_{i=1}^m$, where $m=|\M|$. If none of the duals of the co-events `overlap', so that $\p_i^*\cap\p_j^*=\emptyset$ for $i\neq j$, then it is easy to see (lemma \ref{lemma:principle classical partition construction} below) that the principle classical partition is given by :
$$\{\p_i^*|\p\in\M\}\sqcup\{\{\g\}|\g\not\in\p^*~for~any~\p\in\M\}.$$
However the duals of co-events in $\M$ will in general overlap, leading to a generalisation of the above. If $\la$ is a classical partition we already know (lemma \ref{lemma:classicality equals homomorphism}) that given $\p_i\in\M$ we can find $A_{m_i}\in\la$ such that $\p_i^*\subset A_{m_i}$. Now assume that $\p_i^*\cap\p_j^*\neq\emptyset$ for some $i\neq j$, then $\p_j^*\cap A_{m_i}\neq\emptyset$, so using lemma \ref{lemma:classicality equals homomorphism} $\p_j^*$ is also a subset of $A_{m_i}$. Repeating this argument, if we can find a $\p_k^*$ that has non-empty intersection with $\p_j^*$, then $\p_k^*\subset A_{m_i}$ even if $\p_k^*\cap\p_i^*=\emptyset$. Thus we really want to `group together' the $\p_i^*$ that are related by some sort of transitive closure of intersection; we achieve this by use of an equivalence relation.

First we define a relation, $\sim_\cap$, on the set $\Mstar\H :=\{\p^*|\p\in\M\}$ as follows:
$$\p_i^*\sim_\cap\p_j^* \Leftrightarrow \p_i^*\cap\p_j^*\neq\emptyset.$$
This relation is reflexive and symmetric, thus its transitive closure, $\approx_\cap$, is an equivalence relation, which we will call \emph{intersection equivalence}. We can therefore partition $\Mstar\H$ using $\approx_\cap$ to give us the set $\Mint\H$, and will denote by $\Phi^*(\p_i^*)$ the equivalence class containing $\p_i^*$. Finally, we can `merge' the elements of the equivalence class $\Phi^*(\p_i^*)$ to give us the set:
\beq
\PF_i:=\bigcup_{\p^*\in\Phi^*}\p^*.
\eeq
It is easy to see that $\PF_i\in\EA$ (since is it the union of elements of $\EA$), so its dual $\Phi^{F}_i:=\Phi^{F**}_i$ is a co-event. This leads us to define:
\bea
\MF\H := \{\PF_i | \Phi^*\in\Mint\H\}, \nonumber \\
{\cal{M}}_F\H := \{ \Phi^{F}_i | \Phi^*\in\Mint\H\}.
\eea
The $\PF_i$ (or their duals) are sometimes referred to as \emph{fat co-events}. The fat co-event corresponding to $\p_i^*$ is denoted by $\PF(\p_i^*)$. It is clear that the $\PF_i$ are disjoint, however they do not necessarily form a cover of $\O$. Thus if we define $Z$ to be the set of singleton sets of histories that are not an element of any $\PF_i$ (or equivalently any $\p_i^*$) we can form the partition:
\beq
\la_C = \MF\H \sqcup Z,
\eeq
of $\O$. It is easy to see that $\la_C$ is the principle classical partition with respect to $\mu$.

\begin{lemma}\label{lemma:principle classical partition construction}
Let $\H$ be a histories theory with a finite sample space. Then $\la_C$ as defined above is the principle classical partition with respect to $\mu$.
\end{lemma}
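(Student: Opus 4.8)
The plan is to lean on the uniqueness half of Theorem~\ref{thm:principle classical partition}. Since the principle classical partition is the unique minimal classical partition (with respect to $\mu$), it will be enough to verify three things about $\la_C=\MF\H\sqcup Z$: that it is a partition of $\O$, that it is classical with respect to $\mu$, and that every classical partition of $\O$ is a coarse graining of $\la_C$. The last two properties together say that $\la_C$ is itself classical and lies below every classical partition in the fine-graining order, so it is minimal; uniqueness then identifies $\la_C$ as the principle classical partition. The key tool at each stage is Lemma~\ref{lemma:classicality equals homomorphism}, which reduces classicality of a partition $\la$ to the statement that every $\p^*$ (for $\p\in\M$) is contained in a single block of $\la$.

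That $\la_C$ is a partition is essentially already recorded in the discussion preceding the lemma, but I would spell it out. The merged blocks $\PF_i$ are pairwise disjoint: if $\PF_i\cap\PF_j\neq\emptyset$ then some history lies in a dual $\p_a^*$ of the class $\Phi^*_i$ and in a dual $\p_b^*$ of the class $\Phi^*_j$, whence $\p_a^*\cap\p_b^*\neq\emptyset$, so $\p_a^*\sim_\cap\p_b^*$ and therefore $\p_a^*\approx_\cap\p_b^*$, forcing $\Phi^*_i=\Phi^*_j$. By construction $Z$ consists of the singletons of exactly those histories lying in no $\PF_i$, so the blocks of $\MF\H$ and of $Z$ are disjoint and jointly cover $\O$. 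For classicality, take any $\p\in\M$; its dual $\p^*$ belongs to its own intersection-equivalence class and hence is a subset of the corresponding merged block $\PF_i\in\MF\H\subset\la_C$. By Lemma~\ref{lemma:classicality equals homomorphism} this containment, holding for every $\p\in\M$, is precisely the condition that $\la_C$ be classical.

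The crux, and the step I expect to be the main obstacle, is showing that $\la_C$ is finer than every classical partition. Let $\la'$ be classical; I must place each block of $\la_C$ inside a single block of $\la'$. Singleton blocks in $Z$ are trivial, since any history sits in one block of $\la'$. For a merged block $\PF_i=\bigcup_{\p^*\in\Phi^*_i}\p^*$ the argument propagates along the transitive closure $\approx_\cap$. Fix $\p_a^*\in\Phi^*_i$; by Lemma~\ref{lemma:classicality equals homomorphism} there is a block $A\in\la'$ with $\p_a^*\subset A$. If $\p_b^*\in\Phi^*_i$ satisfies $\p_a^*\cap\p_b^*\neq\emptyset$, then $\p_b^*\cap A\neq\emptyset$; but Lemma~\ref{lemma:classicality equals homomorphism} also places $\p_b^*$ inside some block of $\la'$, and since distinct blocks of $\la'$ are disjoint that block can only be $A$, so $\p_b^*\subset A$. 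Iterating this one-step implication along any finite chain witnessing $\p_a^*\approx_\cap\p_b^*$ shows every dual in $\Phi^*_i$ lands in the same block $A$, whence $\PF_i\subset A$. The delicate point is exactly this propagation: a single intersection forces two duals into a common block, and because $\approx_\cap$ is by definition the transitive closure of $\sim_\cap$, the common-block property spreads across the entire class, with the disjointness of $\la'$ pinning down the block uniquely at each step. Having checked that $\la_C$ is a classical partition below every classical partition, the uniqueness clause of Theorem~\ref{thm:principle classical partition} identifies it as the principle classical partition with respect to $\mu$.
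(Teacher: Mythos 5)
Your proof is correct and follows essentially the same route as the paper's: both establish classicality of $\la_C$ via Lemma~\ref{lemma:classicality equals homomorphism}, then show $\la_C$ lies below every classical partition by propagating the containment $\p^*\subset A$ along the transitive closure $\approx_\cap$ (using disjointness of the blocks of $\la'$ at each step), exactly as in the paper's inductive argument. Your additional explicit checks (that $\la_C$ is genuinely a partition, and the appeal to the uniqueness clause of Theorem~\ref{thm:principle classical partition}) only make explicit what the paper leaves implicit.
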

\begin{proof}
Every $\p\in\M$ is an element of an equivalence class $\Phi^*(\p)\in\Mint\H$, and thus is a subset of the fat co-event $\PF(\p)\in\la_C$. Then by lemma \ref{lemma:classicality equals homomorphism} $\la_C$ is a classical partition with respect to $\mu$.

Now let $\la = \{A_i\}$ be a classical partition (with respect to $\mu$). In the above we showed that if $\p_i^*\subset A_m$ then $\p_j^*\subset A_m$ whenever $\p_i^*\cap\p_j^*\neq\emptyset$. Then using an inductive argument it is easy to see that $\p_k^*\subset A_m$ whenever $\p_k^*\approx_\cap \p_i^*$, so that $\PF(\p_i^*)\subset A_m$. Further, for any element $\{\g\}$ of $Z$ we can find a $A_m$ such that $\{\g\}\subset A_m$, thus $\la$ can be considered a partition of $\la_C$. But since $\la$ was arbitrary, $\la_C$ is less than or equal to every classical partition; and so must be the principle classical partition.
\end{proof}

The principle classical partition helps us to further develop the notion of classicality for the multiplicative scheme, and more generally the histories approach. Note the contrast with the idea of classicality in consistent histories (see for example \cite{Griffiths:1996}), in which the absence in general of a `principle decoherent partition' allows the possibility of incompatible but `equally valid' classical interpretations (or `frameworks' \cite{Griffiths:1996}) of a given system.

Further, note that there is no way to distinguish between two intersectionally equivalent co-events using the classical partitions. Thus if we make the assumption that `observable events' are elements of classical partitions, this suggests that no sequence of measurements could allow us to distinguish between the distinct realities described by two co-events within the same intersectional equivalence class. In this way the fat co-events may be a useful practical rule, representing the `finest grained' information we could possibly discover using experimental techniques; a state of affairs we might think of as conforming to a \emph{co-event uncertainty principle}.

\bibliography{Bib}
\bibliographystyle{plain}

%\newpage
%\bibliography{../Bibliography/refs}% Produces the bibliography via BibTeX.
%\bibliographystyle{../Bibliography/JHEP}

\end{document}